\documentclass[runningheads]{llncs}
\usepackage{url}
\usepackage[hidelinks]{hyperref}
\usepackage[utf8]{inputenc}
\usepackage{graphicx}
\usepackage{amsmath}
\usepackage{graphicx}
\usepackage{amssymb}
\usepackage[tableaux]{prooftrees}
\usepackage{pifont}
\usepackage{subcaption}
\usepackage{dsfont}
\usepackage{todonotes}
\usepackage{multirow}
\usepackage{bm}
\usepackage{tabularx}
\usepackage{bbding}

\renewcommand{\paragraph}[1]{\par\smallskip\noindent\textbf{#1}}

\usepackage[shortlabels]{enumitem}
\newcommand{\subsumes}[0]{\mathrel{\ooalign{\hss$\subseteq$\hss\cr\kern0.7ex\raise0.45ex\hbox{\scalebox{0.65}{$\sigma$}}}}}

\def\orcidID#1{\href{http://orcid.org/#1}{\raisebox{-1.25pt}{\includegraphics{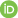}}}}

\usepackage{pifont}

\newcommand{\vampire}{\mbox{\textsc{Vampire}}}
\newcommand{\eprover}{\textsc{E}}
\newcommand{\spass}{\mbox{\textsc{Spass}}}
\newcommand{\setheo}{\mbox{\textsc{Setheo}}}
\newcommand{\leancop}{\mbox{\textsf{leanCoP}}}
\newcommand{\cadical}{\mbox{\textsc{CaDiCaL}}}
\newcommand{\zzz}{Z3}
\newcommand{\meanCoP}{\textsf{meanCoP}}
\newcommand{\upCoP}{\mbox{\textsc{UPCoP}}}

\usepackage{bbding}

\begin{document}
\title{Finding Connections via Satisfiability Solving}
\author{Clemens Eisenhofer\inst{1}\textsuperscript{(\Envelope)}\orcidID{0000-0003-0339-1580} \and
Michael Rawson\inst{2}\orcidID{0000-0001-7834-1567} \and
Laura Kov\'{a}cs\inst{1}\orcidID{0000-0002-8299-2714}}
\authorrunning{Eisenhofer et al.}
%
\institute{TU Wien, Vienna, Austria\\
\email{\{clemens.eisenhofer,laura.kovacs\}@tuwien.ac.at}
\and
University of Southampton, Southampton, UK\\
\email{michael@rawsons.uk}
}
\maketitle
\begin{abstract}
    Commonly used proof strategies by automated reasoners organise proof search either by ordering-based saturation or by reducing goals to subgoals. In this paper, we combine these two approaches and advocate a SAT-based method with symmetry breaking for connection calculi in first-order logic, with the purpose of further pushing the automation in first-order classical logic proofs.
    In contrast to classical ways of reducing first-order logic to propositional logic, our method encodes the structure of the proof search itself. We present three distinct SAT encodings for connection calculi, analyse their theoretical properties, and discuss the effect of using SAT/SMT solvers on these encodings. 
    We implemented our work in the new solver \upCoP{} and showcase its practical feasibility.
\end{abstract}

\section{Introduction}
Search strategies employed by \emph{automated theorem provers for first-order logics} can be divided into two broad classes~\cite{taxonomy}: \emph{ordering-based} and \emph{subgoal-reduction}. 
The first class, which contains saturation-based theorem provers  including \vampire~\cite{Vampire}, \eprover~\cite{eprover}, and \spass~\cite{spass}, work by continuously deducing new facts from an existing set of formulas and expanding the search space with these new facts. 
The second class, containing systems such as \setheo~\cite{SETHEO} or \leancop~\cite{leanCoP}, works by manipulating a partial proof and implementing backtracking if necessary.

The subgoal-reduction class has the disadvantage that redundant, ``useless", formulas in the search space may be explored in duplicate manner, unless special care is taken to ``remember'' where the prover has already been before.
Avoiding such redundant cases for the purpose of efficient reasoning 
is a subject of great interest. Therefore,  \emph{global refinement} within the subgoal-reduction approach to theorem proving has been proposed and investigated~\cite{comparison-of-proof-methods}.
In general, such refinements can contain non-trivial propositional structure; for example, the information ``if clauses $C$ and $D$ are in the current proof attempt, and the current substitution binds $x \mapsto t$ and $y \mapsto s$, we are in a dead-end and have to backtrack'' is a hard formula to reason with.

Backtracking mechanisms are routinely implemented in Boolean satisfiability (SAT) solvers~\cite{DBLP:conf/sat/BiereFW23}, for the purpose of refining the (partial) proof.  
Modern SAT solvers \emph{learn} relevant information as they go, and even allow users to add constraints during the solver's search for a model, in response to the solver's current (partial) assignment.
When a SAT solver cannot find a satisfying assignment, an \emph{explanation} in the form of an unsat core is given. In this paper, we examine these learning and explanation features in the context of first-order theorem proving, which usually makes SAT solvers an ideal vehicle for managing global information and their refinements.

\emph{Here we are interested in the integration of SAT solving and subgoal-reduction}, focusing on Bibel's \emph{connection method}~\cite{Bibel}.
Connection methods yield powerful calculi for automation of expressive logics, for example, for classical first-order logic~\cite{Bibel,SETHEO}, higher-order logic~\cite{DBLP:journals/jar/Andrews89}, linear logic~\cite{Galmiche2000}, and several modal logics~\cite{DBLP:conf/lpar/Otten12}.
Our work directly encodes the search for connection proofs as a Boolean satisfiability problem, allowing the solver to dictate search decisions and respond by asserting constraints, such that when a satisfying assignment is reached, it represents a complete proof. To this end, we introduce three encodings of connections (Sections~\ref{sec:tableau}--\ref{sec:unsatCore}) and show their complementary nature and power (Section~\ref{sec:implementation}).

First, we present our method applied to connection \emph{tableaux} (Section~\ref{sec:tableau}) and highlight some unfortunate properties of this setting. Therefore, we refine our encoding of the connection calculus using a matrix form (Section~\ref{sec:matrix}). Further, we describe how unsat cores can be used to guide iterative deepening during SAT solving (Section~\ref{sec:unsatCore}). The resulting encodings allow many global refinements that are usually not feasible within other methods (Section~\ref{sec:optimisations}).

Our work intends for the SAT solver to return a satisfying assignment of our constraints, where the model encodes a finished proof: matrix or tableau. \emph{In other words, we represent a first-order proof as a propositional model.}
This contrasts with most other uses of SAT solvers in theorem proving in which ground unsatisfiability is the aim~\cite{AVATAR}, often witnessing Herbrand-style refutation by instantiation of first-order clauses. 

The practical use of our approach is  showcased by evaluating our work on the TPTP repository of first-order problems~\cite{TPTP}. To this end, we provide a proof-of-concept implementations of our techniques in our new solver \upCoP{}, where \upCoP{} uses either the \cadical{} SAT solver~\cite{CaDiCal} and the \zzz{} SMT solver~\cite{DBLP:conf/tacas/MouraB08}. When evaluating \upCoP{} against the state-of-the-art \meanCoP{} solver~\cite{curiously-effective}, our experimental findings demonstrate that our work can solve $179$ problems that \meanCoP{} cannot (Section~\ref{sec:implementation}).

\paragraph{Summary of Contributions.}
\begin{enumerate}
    \item We encode the existence of first-order connection calculi proofs as Boolean satisfiability problem, using connection tableaux (Section~\ref{sec:tableau}), a matrix representation (Section~\ref{sec:matrix}), and iterative deepening via unsat core refinement (Section~\ref{sec:unsatCore}). We show soundness and completeness of our encodings.
    \item We explore optimisations to reduce redundancy and symmetry in the encodings (Section~\ref{sec:optimisations}). 
    \item We discuss our implementation of the encodings in the new prototype solver \upCoP{} and evaluate it on the TPTP benchmark set (Section~\ref{sec:implementation}). 
\end{enumerate}

\section{Preliminaries}
\label{sec:preliminaries}
We use standard syntax and semantics of classical first-order logic~\cite{smullyan}.
Logical objects such as terms $t$ may be indexed: $t^i_j$.
We assume the input problem has been negated and converted to conjunctive normal form (CNF) by a satisfiability-preserving transformation~\cite{handbook-ar-nf}. However, this transformation is optional in theory~\cite{Bibel}.

\subsection{Satisfiability Solving}
We assume familiarity with Boolean satisfiability (SAT) solving~\cite{handbook-of-satisfiability} and satisfiability modulo theories (SMT)~\cite{DBLP:conf/jelia/Tinelli02}.
In addition to the basic decision procedure for Boolean formulas, many SAT solvers support solving \emph{under assumptions} and \emph{unsatisfiable cores}.
Solving under assumptions allows fixing some literals temporarily for the duration of a solving run: afterwards, the solver ``forgets'' them and their consequences.
If the solver detects that the problem is unsatisfiable under assumptions, it may extract a subset of the assumptions used to derive inconsistency: the so-called ``unsat core''.
Cores may not be \emph{minimal}, so inconsistency can be derived with a strict subset of the core.
Minimal cores can be generated at additional computational cost~\cite{minimal-unsat-cores-smt,minimal-unsat-cores}.

Some SAT and SMT solvers, including \cadical~\cite{ipasir-up} and \zzz~\cite{user-propagation}, allow the user to intervene \emph{during} search by a variety of means, often under the title ``user propagation''.
Such mechanisms allow employing a solver to tackle a broad class of problems efficiently.
For our purposes, we assume we can be notified when a SAT variable is assigned true or false, and respond by asserting additional constraints, potentially containing fresh SAT variables.
We write $J_1, \ldots, J_n \Vdash F$ to represent that we added (\emph{propagated}) the constraint $J_1 \wedge \ldots \wedge J_n \Rightarrow F$ for some formula $F$ to the solver, given that the solver's current model satisfies all antecedent literals $J_1, \ldots, J_n$.
This feature allows us to avoid eagerly generating an extensive set of all possible constraints and add only the currently relevant parts of the encoding. This kind of lazy generation is desirable in our case.

\subsection{Connection Tableaux}
\label{sec:prelim_tableaux}
\begin{figure}[t]
    \centering
    \begin{forest}
        [,
        [$L$], [\ldots], [$K$]
        ]
    \end{forest}
    \hspace{.1in}
    \begin{forest}
        [\ldots,
        [$L'$, name=root
            [$L$, name=mate]
            [\ldots]
            [$K$]
        ]
        ]
        \draw (mate) to[out=north west, in=west] (root);
    \end{forest}
    \hspace{.1in}
    \begin{forest}
        [\ldots,
        [$L'$, name=ancestor
        [\ldots,
        [$L$, name=leaf], [\ldots]
        ], [\ldots]
        ]
        ]
        \draw (leaf) to[out=north west, in=west] (ancestor);
    \end{forest}
    \caption{Connection tableau rules, left-to-right: \emph{start}, \emph{extension}, and \emph{reduction}. In \emph{start} and \emph{extension}, $L \vee \ldots \vee K$ is a freshly-renamed copy of a clause from the input problem. In \emph{extension} and \emph{reduction}, $L$ is connected to $L'$ using $\sigma$.}
    \label{fig:tableau-rules}
\end{figure}
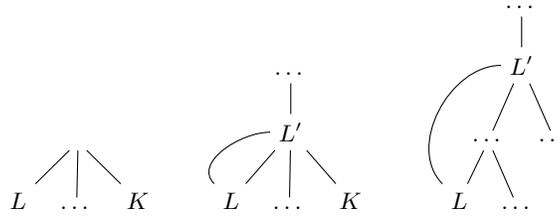
Connection tableaux are essentially clausal tableaux~\cite{tableauxHandbook} with the additional constraint that each clause added to a branch must have at least one literal \emph{connected} to the current leaf literal~\cite{handbook-ar-model-elimination}. \emph{Two literals are connected if they have the same atom but opposite polarity: they are dual.}
Recall that in the first-order case, clauses in the tableau have their variables renamed apart from any other, and a global substitution $\sigma$ is applied to the entire tableau to connect literals (unification).
The connection tableau calculus is not \emph{confluent} and requires both backtracking and a fair enumeration of tableaux for completeness.

We say that two literals $L, K$ \emph{can} be connected and write $L \bowtie K$ if there exists some substitution $\rho$ such that $\rho(L)$ is connected to $\rho(K)$.
A subset of input clauses are considered potential roots of the tableau~\cite{handbook-ar-model-elimination}: we assume these \emph{start} clauses (conjectures) have been chosen in a way that at least one conjecture can be used for finding a proof.
Equality is not handled by the basic connection calculus, and it is either axiomatised~\cite{handbook-ar-paramodulation} or preprocessed away by some variation of Brand's modification~\cite{brand}.
We sometimes write $C^k$ to distinguish the $k$\textsuperscript{th} copy of the input clause $C$, indexing its variables as $x^k$.

Conventionally,  three operations manipulate connection tableaux, shown in Figure~\ref{fig:tableau-rules}.
\emph{Start} operations pick a start clause and add it at the root of the tableau. The chosen clause's literals are the initial leaves of the tableau.
\emph{Extension} operations add a copy of a clause from the input clause set below a leaf literal of the tableau, connecting at least one of the newly added clause's literals with the leaf.
\emph{Reduction} operations connect a leaf literal with another literal on the path from the literal toward the tableau's root.
A branch of the tableau is \emph{closed} in case the leaf of that branch is connected to some other literal. The whole tableau is closed, thus representing a proof if all of its branches are closed. In general, all these operations must be backtracked over to achieve completeness.

\subsection{The Connection Method, Matrices, and Spanning Connections}
\label{sec:matrixDescr}
Connection tableaux are an instance of the connection method~\cite{Bibel}.
While connection calculi are a rich topic with many facets, we are primarily interested in the following \emph{matrix} representation. 
We consider matrices in \emph{normal form} and thus define a matrix as a non-empty set of rigid clauses; that means, clauses containing rigid variables.
The matrix can contain an arbitrary number of rigid copies of the same input clause, where variables are renamed apart, and a global substitution $\sigma$ is applied. A \emph{path} through a matrix is a set that contains exactly one literal from each copy of the clause in the matrix and is called \emph{closed} if it contains at least one connected pair of literals, otherwise \emph{open}.
A \emph{matrix proof}, or a matrix with a \emph{spanning set of connections}, is a matrix for which there are no open paths.

Further, a matrix is \emph{fully connected} with respect to a set of connections if each literal in the matrix is connected to at least one other literal of a different clause~\cite{using-matings-for-pruning}.
A matrix proof $M$ is \emph{minimal} if there is no proof using only a strict subset of $M$. 
Although we can require that there is at least one conjecture in the matrix we could consider as a start clause, there is, in contrast to connection tableau proofs, no inherent tree structure in matrix proofs.

\begin{example}
\label{ex:running}
To illustrate the two representations, consider the unsatisfiable set
\[ \left\{~~ \forall x\forall y.~ \lnot P(x) \vee \lnot P(f(y)),~~ \forall z.~ P(z) \vee P(f(z)) ~~\right\}
\]
and compare matrix and tableau refutations thereof in Figure~\ref{fig:tab} and~\ref{fig:mat}\footnote{the literals of each clause are written vertically in columns and the order of the clause copies does not matter}. Figure~\ref{fig:openPath} shows a fully connected matrix that is not a proof as it has an open path.
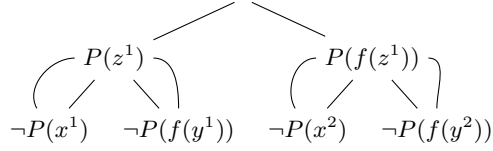
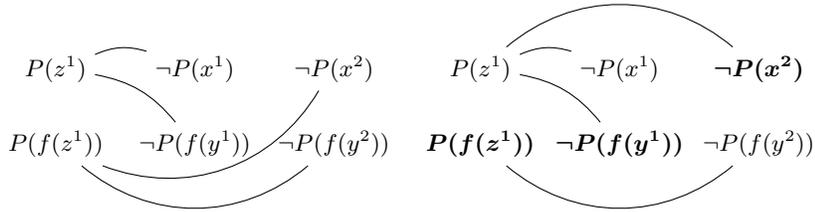
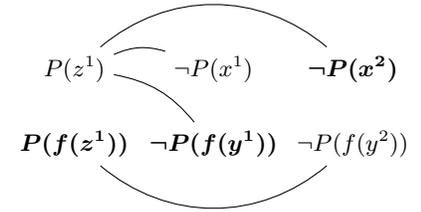
\begin{figure}[t]
\begin{center}
\begin{subfigure}[t]{.7\textwidth}\centering
\begin{forest}
[
 [$P(z^1)$, name=pz
  [$\lnot P(x^1)$, name=px1]
  [$\lnot P(f(y^1))$, name=pfy1]
 ]
 [$P(f(z^1))$, name=pfz
  [$\lnot P(x^2)$, name=px2]
  [$\lnot P(f(y^2))$, name=pfy2]
 ]
]
\draw (px1) to[out=120, in=west] (pz);
\draw (pfy1) to[out=north, in=east] (pz);
\draw (px2) to[out=120, in=west] (pfz);
\draw (pfy2) to[out=north, in=east] (pfz);
\end{forest}
\caption{Tableau proof with curved lines indicating connections. $\sigma$ is computed such that, e.g., $\sigma(z^1) = f(y^1)$.}
\label{fig:tab}
\end{subfigure}
\end{center}
\begin{subfigure}[t]{.45\textwidth}\centering
\begin{tikzpicture}[main/.style = {draw, circle}] 
\node (C31) at (0, 0) {$P(z^1)$}; 
\node (C32) at (0, -1) {$P(f(z^1))$}; 

\node (C11) at (1.85, 0) {$\lnot P(x^1)$}; 
\node (C12) at (1.85, -1) {$\lnot P(f(y^1))$}; 

\node (C21) at (3.7, 0) {$\lnot P(x^2)$}; 
\node (C22) at (3.7, -1) {$\lnot P(f(y^2))$}; 

\path (C11) edge[bend right=20] (C31);
\path (C12) edge[bend right=20] (C31);

\path (C21) edge[bend left=40] (C32);
\path (C22) edge[bend left=40] (C32);

\end{tikzpicture}
\caption{Same proof in matrix form}
\label{fig:mat}
\end{subfigure}
\begin{subfigure}[t]{.45\textwidth}\centering
\begin{tikzpicture}[main/.style = {draw, circle}] 
\node (C31) at (0, 0) {$P(z^1)$}; 
\node (C32) at (0, -1) {$\bm{P(f(z^1))}$}; 

\node (C11) at (1.85, 0) {$\lnot P(x^1)$}; 
\node (C12) at (1.85, -1) {$\bm{\lnot P(f(y^1))}$}; 

\node (C21) at (3.7, 0) {$\bm{\lnot P(x^2)}$}; 
\node (C22) at (3.7, -1) {$\lnot P(f(y^2))$}; 

\path (C11) edge[bend right=20] (C31);
\path (C12) edge[bend right=20] (C31);

\path (C21) edge[bend right=40] (C31);
\path (C22) edge[bend left=40] (C32);
\end{tikzpicture}
\caption{A matrix that does not represent a proof. An open path is shown in \textbf{bold}.}
\label{fig:openPath}
\end{subfigure}
\caption{Connection calculus derivations}
\end{figure}
\end{example}

\section{Encoding Connection Tableaux}
\label{sec:tableau}
We first encode the search for closed connection tableaux (Section~\ref{sec:prelim_tableaux}) in a SAT solver. 
A closed connection tableau is explicitly constructed from a satisfying assignment.
We encode that a literal $L$ is a leaf of the tableau at path $U$ using a SAT variable $\langle L; U\rangle$, where $U$ is a set of literals labelling the nodes from $L$ towards the root.
$L$ and $U$ are used only to determine the corresponding variable. For example, $\lnot P(x^2)$ in Figure~\ref{fig:tab} is represented by a variable $\langle \lnot P(x^2); \{ P(f(z^1)) \}\rangle$. 
The substitution $\sigma$ and unification of connected literals 
are handled with another family of variables we discuss later. 

\paragraph{\bf Connection tableau rules as SAT.} We begin by asserting that at least one start clause $S$ must be present in the tableau. Therefore, all literals $L \in S$ must be in the tableau at the root:
\begin{equation}
\label{eq:start}
\bigvee_S \bigwedge_{L \in S} \langle L; \emptyset \rangle
\end{equation}
The SAT solver is free to choose any start clause $S$, but all literals in the chosen  $S$ must be present at the root of the tableau.
As the solver assigns variables $\langle L; U \rangle$ true
we respond by propagating additional requirements.
We  demand that each literal has either an \emph{extension} $E_{C, K}$ or a \emph{reduction} $R_{K}$ applied, in order to close the corresponding branch in the final tableau:
\begin{equation}
\label{eq:tableauxEncoding}
\langle L; U \rangle \Vdash \bigvee_{C, K} E_{C, K} \vee \bigvee_{K \in U} R_{K}
\end{equation}
Each formula $E_{C, K}$ represents applying an extension operation at $L$ using a fresh copy of a clause $C$ containing a literal $K \bowtie L$, which yields
\begin{equation}
\label{eq:ext}
E_{C, K} := \left(\langle L \sim K\rangle \wedge \bigwedge_{\substack{K' \in C\\K' \neq K}} \langle K'; \{ L \} \cup U \rangle\right)
\end{equation}
i.e. that if an extension step $E_{C, K}$ is taken, $L$ and $K$ are connected and other literals $K' \in C$ must be in the tableau with path $\{ L \} \cup U$.
We write $\langle L \sim K \rangle$ for the SAT variable representing that $L$ and $K$ are connected modulo $\sigma$. Similarly,
\begin{equation}
\label{eq:red}
R_K := \langle L \sim K\rangle
\end{equation}
where $K \bowtie L$ and $K \in U$.
The steps $E_{C, K}$ and $R_{K}$ are computed based only on the \emph{possible} connection relation $\bowtie$: the current global substitution $\sigma$ maintained by the SAT solver is ignored, otherwise relevant cases may be missed on backtracking.
Iterative deepening may be applied as usual~\cite{leanCoP}, for example, by offering no $E_{C, K}$ options if the path length $|U|$ exceeds the depth limit.

\begin{example}
    Consider the input clauses from Example~\ref{ex:running} and assume the solver assigned literal $\langle \lnot P(x^2); \{ P(f(z^1)) \}\rangle$ true, so the solver considers literal $\lnot P(x^2)$ to be within the tableau, below $P(f(z^1))$. The solver propagates
\begin{align*}
    & \langle \lnot P(x^2); \{ P(f(z^1)) \}\rangle \Vdash\\
    &\hspace*{5em}\left(\langle P(f(z^2)); \{ P(f(z^1)), \lnot P(x^2) \} \rangle \wedge \langle\lnot P(x^2) \sim P(z^2)\rangle\right) \vee \vphantom{A}\\
    &\hspace*{5em}\left(\langle P(z^2); \{ P(f(z^1)), \lnot P(x^2) \} \rangle \wedge \langle\lnot P(x^2) \sim P(f(z^2))\rangle\right) \vee \vphantom{A}\\
    &\hspace*{5em}\langle\lnot P(x^2) \sim P(f(z^1))\rangle
\end{align*}
where the first two disjuncts represent applications of the extension rule and the last an application of reduction.
\end{example}

\paragraph{{\bf Unification Constraints.}} 
Variables $\langle L \sim K\rangle$ constrain $\sigma$ such that $\sigma(L)$ is connected to $\sigma(K)$.
When the SAT solver assigns such a literal, we check whether this is consistent with the existing set of constraints.
This can be done by applying a unification algorithm, perhaps using an efficient data structure such as the \emph{variable trail}~\cite{handbook-ar-model-elimination} to handle backtracking.
We note in passing that algebraic datatype solvers~\cite{datatypes}, which are supported by many SMT solvers, implement a similar decision procedure.
If the constraints are not satisfiable, we produce a \emph{conflict clause} containing the reasons as Boolean assignments.
For example, if we have $\langle L \sim K\rangle$, $\langle J \sim K\rangle$ and $\langle L \sim J\rangle$, but $\langle L \sim K\rangle \wedge \langle L \sim J\rangle$ is already unsatisfiable, we add the conflict
\begin{equation}
\lnot\langle L \sim K\rangle \vee \lnot\langle L \sim J\rangle
\end{equation}
causing the solver to backtrack.
This approach also allows a uniform treatment of refinements such as \emph{regularity} based on \emph{disequation constraints}~\cite{handbook-ar-model-elimination}.

\paragraph{\bf SAT Encoding of Closed Connection Tableaux.} We now have all the ingredients for our SAT encoding, which we denote by $\mathcal{E}_T$. By asserting that (i) a start clause must be present~\eqref{eq:start}, (ii) each literal in the tableau must have a reduction or extension rule applied to it~\eqref{eq:tableauxEncoding}
and (iii) connections must have a consistent unifier, enforced by unification constraints, our encoding $\mathcal{E}_T$ is finished. Each propositional model of $\mathcal{E}_T$ represents a closed connection tableau.

\paragraph{\bf Pathological Behaviour.}
Our SAT encoding $\mathcal{E}_T$ has drawbacks. 
Importantly, extension adds a \emph{fresh instance} from the clause set to the tableau, and so the number of different SAT variables $\langle L; U \rangle$ grows rapidly.
In turn, this means the resulting SAT problem has only limited propositional structure between variables that the solver can exploit.
Search tends to degrade towards the kind of exhaustive enumeration of possible derivations that systems such as \leancop{}~\cite{leanCoP} efficiently implement, but with the added overhead of a SAT solver.

\section{Encoding Matrix Proofs}
\label{sec:matrix}
To avoid the previously described problems of $\mathcal{E}_T$, we encode matrix proofs (see Section~\ref{sec:matrixDescr}). We denote our matrix-based encoding $\mathcal{E}_M$. 
Most search routines for spanning sets of connections presented in literature~\cite{Bibel,leanCoP} restrict connections such that proofs are, or could be, simulated by a connection tableau proof~\cite{matrix-based-constructive}.
In our $\mathcal{E}_M$ encoding, we allow arbitrary connections between clauses in the matrix.
In this way, a single matrix proof can correspond to numerous tableau proofs~\cite{using-matings-for-pruning}.
In any event, our new representation $\mathcal{E}_M$ produces a combinatorial problem of finding connections between a set of clauses, which we argue is much more suitable for SAT solvers than $\mathcal{E}_T$.

\subsection{Encoding Overview}
We find a matrix proof with a given \emph{resource limit} and split reasoning in: 
\begin{enumerate}
    \item We encode constraints for a fully-connected matrix (Section~\ref{sec:fully-connected}).
    \item We constrain that the result has a set of spanning connections (Section~\ref{sec:check}).
\end{enumerate}
We use the following result to motivate our encoding.
\begin{theorem}[Fully Connected Matrix]
\label{thm:connected}
Suppose $M$ is a minimal matrix proof with a spanning set of connections. Then $M$ is fully connected.
\end{theorem}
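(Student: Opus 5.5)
We argue by contradiction: assume $M$ is a minimal matrix proof under $\sigma$ but not fully connected. Then some literal $L$ in some clause copy $C \in M$ is connected (modulo $\sigma$) to no literal of any other clause of $M$. The goal is to show that $M' := M \setminus \{C\}$ is still a matrix proof, contradicting minimality. Note that $M'$ is non-empty: if $M = \{C\}$, every path picks a single literal and so contains no connected pair, so $M$ would have an open path.

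\textbf{Key step.} Every path $P'$ through $M'$ must be closed. Extend $P'$ to the path $P := P' \cup \{L\}$ through $M$; this is a legitimate path because it picks exactly one literal from each clause copy, with $L$ chosen from $C$. Since $M$ has a spanning set of connections, $P$ is closed, so it contains a connected pair $\{K_1, K_2\}$.

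\textbf{Case analysis on the pair.}
\begin{itemize}
\item If neither literal is $L$, then both lie in $P'$, so $P'$ is closed.
\item If one of them is $L$, the other lies in $P'$ and hence in a clause of $M'$, which differs from $C$. Then $L$ is connected to a literal of a different clause, contradicting the choice of $L$.
\end{itemize}
Thus every path through $M'$ is closed, so $M'$ is a proof using a strict subset of $M$, contradicting minimality.

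\textbf{Expected obstacle.} The argument relies on how connections and paths are formalised, and the delicate case is a pair drawn from the same clause. A path contains exactly one literal per clause copy, so a connected pair within one path always comes from two different clauses. Hence connections inside a single clause, such as a tautological clause, never close a path and are irrelevant here. I would state this explicitly, so that ``connected to a literal of a different clause'' in the definition of fully connected matches exactly what path closure can use. The global substitution $\sigma$ stays fixed throughout, so removing $C$ (and its variables) does not affect the connectedness of the remaining pairs.
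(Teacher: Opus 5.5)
Your proof is correct and is essentially the paper's argument. The paper uses minimality to obtain an open path $U$ through $M \setminus \{C\}$ and observes that $U \cup \{L\}$ is then an open path of $M$; you run the same extension-by-$L$ step in contrapositive form, showing that every path of $M \setminus \{C\}$ is closed. Your explicit case analysis on the closing pair, together with the remarks on non-emptiness and same-clause connections, just makes precise what the paper leaves implicit.
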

\begin{proof}
First, note that this is similar but not identical to Proposition~1 in Letz's work on matings pruning~\cite{using-matings-for-pruning}.
Suppose, towards contradiction, there is a literal $L \in C \in M$ that is not connected to any other $K$.
Now consider the rest of the matrix $M' = M \setminus \{ C \}$.
Since $M$ is minimal, there is an open path $U$ through $M'$, as otherwise $M'$ would have a spanning set of connections.
As $L$ is not connected to any literal,  $U \cup \{ L \}$ is an open path for $M$, which conflicts with our assumption of $M$ being a proof.
\end{proof}
Theorem~\ref{thm:connected} allows us to restrict our work to fully-connected matrices. This restriction is a good approximation, as few fully-connected matrices do not have a spanning set of connections in practice (see Section~\ref{sec:experiments}).
We use SAT variables of the form $S_C$ to denote that clause $C$ appears in the matrix, sometimes superscripted $S_C^k$ to indicate selecting $C^k$, the $k^{th}$ copy of $C$. We call these $S_C$ \emph{selectors} and call $C$ \emph{selected} if $S_C$ is assigned true.
At least one of the start clauses $C$ must be selected, cf.~\eqref{eq:start}: 
\begin{equation}\label{eq:start:matrix}
\bigvee S^1_C. 
\end{equation}
In Section~\ref{sec:tableau}, we apply iterative deepening on the maximum length of a branch. This kind of resource limit cannot be used here, as there is no obvious notion of a \emph{branch}, so we must devise alternatives. We first apply iterative deepening on the matrix's number $d$ of clauses. We immediately see that we need to introduce at most $d$ selectors for each clause. As we always refer to the same copies, the solver can more easily learn sensible conflicts from $\mathcal{E}_M$ than from $\mathcal{E}_T$. We discuss a further enhanced encoding later in Section~\ref{sec:unsatCore}.

\subsection{Fully Connected Matrices}
\label{sec:fully-connected}
By Theorem~\ref{thm:connected}, we may constrain that each literal in the matrix must connect to at least one other literal.
Similarly to \eqref{eq:tableauxEncoding}, we respond to a selection $S_C$ by propagating that each literal must be connected to some other literal in another clause in the matrix by enumerating all possible connections.
This other clause could be selected or require selection, but there is no distinction between extension and reduction.
Suppose $C$ is selected. For each $L \in C$, we propagate
\begin{equation}
\label{eq:matrixConnect}
    S_C \Vdash~\bigvee_D~\bigvee_{1 \le k \le d}\bigvee_{K \in D^k} S^k_D \wedge \langle L \sim K\rangle
\end{equation}
where $K \bowtie L$ is a literal in the input clause $D$ we connect to, and $k$ indicates which copy $D^k$ of that clause is used.

There are several possible options to enforce that at most $d$ clauses are selected for the matrix.
We suggest using pseudo-Boolean constraints~\cite{DBLP:journals/tcad/ChaiK05} or a direct encoding~\cite{DBLP:conf/cp/BailleuxB03,handbook-of-satisfiability,DBLP:conf/cp/Sinz05} to constrain that ``there are no more than $d$ selector variables assigned''.
We can strengthen this to \emph{exactly} $d$ as we apply iterative deepening, so the less-than-$d$ case was encountered already.

\begin{example}
    \label{ex:connection}
    Assume the situation 
    \[ \{~ S^1_{P(z) \vee P(f(z))},~ \lnot S^1_{\lnot P(x) \vee \lnot P(f(y))},~ \langle P(z^1) \sim \lnot P(x^1) \rangle,~ \langle P(z^1) \sim \lnot P(f(y^1)) \rangle ~\}\] 
    which corresponds to the situation in Figure~\ref{fig:openPath} without the 3\textsuperscript{rd} clause. Assuming that there are at most two copies of each clause, we would propagate that $P(f(z^1))$ needs to connect somehow. That means,
    \begin{align*}
        &S^1_{P(z) \vee P(f(z))} &\Vdash &&\left(S^1_{\lnot P(x) \vee\lnot  P(f(y))} \wedge \langle P(f(z^1)) \sim \lnot P(x^1) \rangle\right) \vee \vphantom{A}\\
        &&&&\left(S^1_{\lnot P(x) \vee \lnot P(f(y))} \wedge \langle P(f(z^1)) \sim \lnot P(f(y^1)) \rangle\right) \vee \vphantom{A}\\
        &&&&\left(S^2_{\lnot P(x) \vee \lnot P(f(y))} \wedge \langle P(f(z^1)) \sim \lnot P(x^2) \rangle\right) \vee \vphantom{A}\\
        &&&&\left(S^2_{\lnot P(x) \vee \lnot P(f(y))} \wedge \langle P(f(z^1)) \sim \lnot P(f(y^2)) \rangle\right) \hphantom{\vee} 
    \end{align*}
    These are all possible options given the available clauses and their copies.
\end{example}

\subsection{Spanning Sets of Connections}
\label{sec:check}
Once we have a fully connected matrix, we check for open paths (see Section~\ref{sec:matrixDescr}).
If there are none, we are done and can use the resulting SAT model to output the proof, consisting of the selected clauses, the connections, and the global substitution.
Suppose instead there is an open path $U$ -- given by the set of literals -- through the matrix $M$.
At least two literals along $U$ must connect to make $M$ a proof.
Let $\bar{S}$ be the set of selectors assigned true.
Propagating
\begin{equation}
\label{eq:openpath1}
\bar{S} \Vdash \bigvee_{\{L, K\} \subseteq U} \langle L \sim K\rangle
\end{equation}
forces the solver to ``fix'' $M$, likely via backtracking, by requiring that $U$ is not an open path.

\begin{example}
    \label{ex:spanning}
    Continuing Example~\ref{ex:connection}, assume the SAT solver further assigns 
    \[ \{~ S^2_{\lnot P(x) \vee \lnot P(f(y))},~ \langle P(f(z^1)) \sim \lnot P(f(y^2)) \rangle, \langle P(z^1) \sim \lnot P(x^2) \rangle ~\}\] 
    true. The result is the fully connected matrix, which is shown in Figure~\ref{fig:openPath}. We cannot propagate further clauses that are not already satisfied. However, we can find the open path $\{~ P(f(z^1)),~ \lnot P(f(y^1)), \lnot P(x^2) ~\}$,
    which we exclude with
    \begin{align*}
        &S^1_{P(z) \vee P(f(z))}, S^1_{\lnot P(x) \vee \lnot P(f(y))}, S^2_{\lnot P(x) \vee \lnot P(f(y))} \Vdash\\
        &\hspace*{5em}\langle P(f(z^1)) \sim \lnot P(f(y^1)) \rangle \vee \langle P(f(z^1)) \sim \lnot P(x^2) \rangle.
    \end{align*}

    As neither $\langle P(f(z^1)) \sim \lnot P(f(y^1)) \rangle$ nor $\langle P(f(z^1)) \sim \lnot P(x^2) \rangle$ results in a consistent unifier, the SAT solver is required to backtrack.
\end{example}

\subsection{Correctness and Complexity of Matrix Encodings}
Encoding $\mathcal{E}_M$ consists of~\eqref{eq:start:matrix},~\eqref{eq:matrixConnect},~\eqref{eq:openpath1}, and constraints for the depth limit.
Its models represent fully-connected matrix proofs.
We show soundness, completeness, and termination for a given size $d$ in $\mathcal{E}_M$, and describe the respective complexity class of $\mathcal{E}_M$.

\begin{theorem}[Soundness]
\label{thm:sound}
A propositional model of $\mathcal{E}_M$ represents a matrix with a spanning set of connections.
\end{theorem}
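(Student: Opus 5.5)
Given a propositional model $\mathcal{I}$ of $\mathcal{E}_M$, I will read off a matrix and a set of connections and show that no path through it is open. The matrix is $M := \{ C^k \mid \mathcal{I}(S^k_C) = \top \}$. The connection set is $\mathcal{C} := \{ \{L,K\} \mid \mathcal{I}(\langle L \sim K\rangle) = \top,\ L, K \text{ occur in clauses of } M\}$. The global substitution $\sigma$ is the unifier maintained by the unification constraints.

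The argument has three steps.

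\textbf{Step 1: $M$ is a matrix.} By \eqref{eq:start:matrix} at least one selector $S^1_C$ is true, so $M$ is non-empty. It consists of renamed-apart copies of input clauses, so it is a matrix in normal form.

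\textbf{Step 2: connections in $\mathcal{C}$ are genuine under $\sigma$.} I rely on the unification constraints. Whenever a set of variables $\langle L \sim K\rangle$ is jointly assigned true but has no simultaneous unifier, the theory part of the procedure adds a conflict clause. A complete model therefore satisfies every such conflict clause. Hence the conjunction of all true $\langle L\sim K\rangle$ has a most general simultaneous unifier $\sigma$, and for each $\{L,K\}\in\mathcal{C}$ the literals $\sigma(L)$ and $\sigma(K)$ are dual. This is the step I expect to need the most care. The encoding is lazy, so "model of $\mathcal{E}_M$" must be understood as a total assignment accepted by the propagator, i.e.\ one that triggers no further conflict or propagation. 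I would state this convention explicitly and justify it from the consistency check being exact: syntactic unification is decidable and complete.

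\textbf{Step 3: no open path (the core).} Let $U$ be any path through $M$, i.e.\ one literal from each selected clause copy. Suppose, towards contradiction, that $U$ is open. Since $\mathcal{I}$ is a final model, the open-path check of Section~\ref{sec:check} was run on $M$ and reported no open path. Equivalently, for this $U$ the constraint \eqref{eq:openpath1} with antecedent $\bar{S}$, the selectors of $M$, is part of the asserted constraints. Its antecedents are all true in $\mathcal{I}$ by construction of $M$, so some $\langle L\sim K\rangle$ with $\{L,K\}\subseteq U$ is true. By Step~2, $\sigma(L)$ and $\sigma(K)$ are connected, so $U$ is closed, a contradiction.

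One subtlety: \eqref{eq:openpath1} may have been added for a different selection. Only the constraints whose antecedent $\bar S$ matches the final selection matter, and the check guarantees that every open path of the final $M$ is excluded before the model is accepted. Hence every path through $M$ contains a connection in $\mathcal{C}$, so $\mathcal{C}$ is spanning.

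Finally, \eqref{eq:matrixConnect} and the size constraint are not needed for soundness. They only make $M$ fully connected and of size $d$, which matters for completeness via Theorem~\ref{thm:connected}, not here.
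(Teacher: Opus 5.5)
Your proposal is correct and takes the same route as the paper. The paper's proof consists only of the observation that a model is accepted only after the open-path check of Section~\ref{sec:check} finds no open path, which is your Step~3; your Steps~1 and~2 (non-emptiness via \eqref{eq:start:matrix} and consistency of $\sigma$ via the unification conflict clauses) spell out what the paper leaves implicit.
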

\begin{proof}
Whenever the SAT solver finds a propositional model, we first check that it represents a proof, adding constraints if not (Section~\ref{sec:check}).
\end{proof}

\begin{theorem}[Completeness]
\label{thm:complete}
If a matrix $M$ together with a spanning set of connections exists, there is a 
propositional model of $\mathcal{E}_M$ at depth $d = |M|$.
\end{theorem}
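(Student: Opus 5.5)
We build the model directly from the given proof $M$. One adjustment is needed first. Encoding $\mathcal{E}_M$ forces every selected literal to be connected, via~\eqref{eq:matrixConnect}, and this may fail for an arbitrary proof. So we start with a minimal sub-matrix $M_0 \subseteq M$ that is still a matrix proof, keeping the restriction of the spanning connections and the global substitution $\sigma$. By Theorem~\ref{thm:connected}, $M_0$ is fully connected.

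Since $|M_0| \le |M| = d$, each input clause has at most $d$ copies in $M_0$. We can therefore injectively map the copies of each input clause $C$ occurring in $M_0$ onto indices $1,\dots,d$, renaming variables apart accordingly. We may also assume the start clause is indexed as copy $1$. This relies on the paper's assumption that some conjecture occurs in the proof; we permute the copy indices of that clause if necessary.

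If the depth constraint is read as \emph{exactly} $d$ selectors, we must handle the case $|M_0| < d$. We pad back up to $d$ by re-adding clauses of $M \setminus M_0$; since $M$ itself is a proof, any superset of $M_0$ within $M$ has no open paths. Their literals must then also be connected. We take these connections from the spanning set of $M$ where possible. Otherwise we argue that the ``at most $d$'' reading suffices, because the iterative-deepening argument in the paper only needs existence at some depth $\le d$. This padding issue is the step I expect to be the main obstacle. The cleanest fix is to state completeness with the at-most-$d$ cardinality constraint and remark that exactly-$d$ is justified since smaller depths were already refuted.

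Next we define the assignment:
\begin{itemize}
\item set $S^k_C$ true iff $C^k \in M_0$;
\item set $\langle L \sim K\rangle$ true iff $\sigma(L)$ and $\sigma(K)$ are dual, for literals of selected copies;
\item set all remaining variables false, and choose auxiliary variables of the cardinality encoding consistently.
\end{itemize}

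We then check each constraint in turn:
\begin{itemize}
\item Constraint~\eqref{eq:start:matrix} holds by the choice of copy $1$ of the start clause.
\item Each instance of~\eqref{eq:matrixConnect} holds because $M_0$ is fully connected under $\sigma$: for every literal $L$ of a selected copy there is a dual $K$ in another selected copy $D^k$, so the disjunct $S^k_D \wedge \langle L \sim K\rangle$ is true.
\item The unification constraints are never violated, since every true connection variable is witnessed by the single substitution $\sigma$, so no conflict clause is produced.
\item Every open-path clause~\eqref{eq:openpath1} whose antecedent $\bar S$ is satisfied concerns a path $U$ through $M_0$. Since $M_0$ has no open paths, some pair in $U$ is dual under $\sigma$, and the corresponding variable is true.
\end{itemize}

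Finally, propagated constraints are only ever of these forms, so the lazy generation order does not matter. Hence the assignment is a model of $\mathcal{E}_M$ at depth $d$.
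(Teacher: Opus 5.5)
Your construction follows the paper's route: read selectors and connection atoms off a proof, then check each constraint family. Passing to a minimal $M_0$ also repairs a point the paper's proof leaves implicit. The paper represents $M$ itself, with $S^k_C$ true iff $M$ has at least $k$ copies of $C$, so exactly $|M| = d$ selectors are true for free. But \eqref{eq:matrixConnect} then holds only if $M$ happens to be fully connected.

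\textbf{The gap.} Your treatment of $|M_0| < d$ does not work, and the at-most-$d$ fallback is not a harmless rephrasing. The antecedent $\bar{S}$ of \eqref{eq:openpath1} lists only the selectors that were true in an earlier rejected model $M'$. Under an at-most-$d$ bound, your model can satisfy $\bar{S}$ with $M' \subsetneq M_0$. Then $U$ is merely a path through $M'$, and the pair closing its extensions through $M_0$ may lie in $M_0 \setminus M'$.

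For instance, take the clauses $P \vee Q$, $\lnot P \vee Q$, $P \vee \lnot Q$, $\lnot P \vee \lnot Q$. The fully connected matrix $\{P \vee Q, \lnot P \vee \lnot Q\}$ has the open path $\{P, \lnot Q\}$, which contains no connectable pair. So \eqref{eq:openpath1} becomes $\lnot S^1_{P \vee Q} \vee \lnot S^1_{\lnot P \vee \lnot Q}$, and your model violates it. Since any three of these clauses are satisfiable, every proof needs all four, and other copy indices can be blocked the same way. This is exactly why the paper fixes the size to exactly $d$ in $\mathcal{E}_M$, and why it weakens \eqref{eq:openpath1} to \eqref{eq:openpath2} once $\mathcal{E}_U$ drops a fixed size.

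For the same reason, your claim that every triggered instance of \eqref{eq:openpath1} ``concerns a path through $M_0$'' needs an argument. It holds only because $|\bar{S}| = d$ equals the number of selectors your model makes true, which forces the two selections to coincide. Padding with clauses of $M \setminus M_0$ fails too, as you suspected, since those clauses may contain literals with no $\sigma$-dual partner at all.

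\textbf{The fix.} Pad $M_0$ back to exactly $d$ clauses with \emph{duplicates} of clauses it already contains. For a new copy $C^{k'}$ of some $C^k \in M_0$, extend $\sigma$ by $\sigma(x^{k'}) := \sigma(x^k)$. Then:
\begin{itemize}
\item each literal of $C^{k'}$ is $\sigma$-dual to the partner of its counterpart in $C^k$, which lies in a clause of $M_0$ other than $C^k$, so \eqref{eq:matrixConnect} still holds;
\item every path through the enlarged matrix contains a path through $M_0$, so it is closed;
\item $\sigma$ still witnesses every true connection atom;
\item no input clause receives more than $d$ copies.
\end{itemize}
With exactly $d$ selectors true, any satisfied $\bar{S}$ is your entire selection, so its $U$ is a path through your matrix and contains a $\sigma$-dual pair. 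Use contiguous indices $1, \dots, n_C$, as the paper does.

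\textbf{A smaller point.} Minimising $M$ may discard its start clause. You therefore need the standard assumption that the non-start clauses are satisfiable. That assumption forces every matrix proof, $M_0$ included, to contain a start clause.
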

\begin{proof}
Such a matrix proof $M$ can be represented by setting $S^k_C$ true iff there are at least $k$ copies of $C$ in $M$.
The spanning set of connections is represented by setting $L \sim K$ iff $L$ is connected to $K$ in the proof.
This model of $\mathcal{E}_M$ and all its submodels are consistent modulo the semantics of $\sim$ and all possible instances of~\eqref{eq:matrixConnect}. Furthermore, the final model satisfies the depth constraints and contains at least one start clause.
Also, we do not block the model with the final check in Section~\ref{sec:check} as the set of connections are spanning.
\end{proof}

\begin{theorem}[Complexity Bound]
\label{thm:complexity}
Solving our particular encoding $\mathcal{E}_M$ is in the complexity class $\Sigma_2^P$ with respect to the input size and the matrix proof size.
\end{theorem}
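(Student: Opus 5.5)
The plan is to view the question answered by $\mathcal{E}_M$ at a fixed depth $d$ as a formula with an existential block followed by a universal block, each over polynomially many bits in the input size and $d$. The existential block guesses a candidate model. The universal block ranges over paths through the selected matrix and must certify that each one is closed. This is exactly a $\Sigma_2^P$ pattern: $\exists x\,\forall y.\,\varphi(x,y)$ with $\varphi$ decidable in polynomial time.

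\emph{Existential part.} We guess a truth assignment to the selectors $S^k_C$ with $k \le d$. We also guess the connection variables $\langle L \sim K\rangle$ between literals of the at most $d$ selected copies. Both are polynomial in the input size and $d$, since the number of copies is bounded by $d$ times the number of input clauses and each copy has input-bounded many literals. Given the guess, a deterministic polynomial-time check verifies the following.
\begin{itemize}
\item Constraint~\eqref{eq:start:matrix} holds.
\item The cardinality constraint (exactly, or at most, $d$ selectors) holds.
\item Every instance of~\eqref{eq:matrixConnect} for a selected clause is satisfied. There are polynomially many instances, each of polynomial length, since the disjunction over $D$, $k \le d$ and $K \in D^k$ is polynomial.
\item The unification constraints induced by the true $\langle L \sim K\rangle$ are jointly satisfiable.
\end{itemize}
For the last check we appeal to the fact that syntactic unification of a set of equations is decidable in polynomial time, using DAG or term-sharing representations so that exponential term growth does not matter. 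Unlike~\eqref{eq:matrixConnect}, the constraints~\eqref{eq:openpath1} are not given explicitly; there can be exponentially many of them, one per open path. Treating them is the universal part.

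\emph{Universal part.} The constraints~\eqref{eq:openpath1} are satisfied by the guessed assignment iff every path $U$ through the selected matrix contains a pair $\{L,K\}\subseteq U$ with $\langle L\sim K\rangle$ true. This holds because the lazily propagated instances are precisely the instances for open paths. A path is specified by choosing one literal from each of at most $d$ selected clauses, which takes polynomially many bits. For a fixed path, checking for a connected pair takes $O(d^2)$ lookups. Hence ``no open path'' is a $\mathrm{coNP}$ predicate of the guess, and the whole problem has the form $\exists$ assignment $\forall$ path: polynomial check, which places it in $\Sigma_2^P$.

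The step I expect to need the most care is arguing that this $\exists\forall$ formulation faithfully captures the models of $\mathcal{E}_M$ as produced by lazy propagation. We must show that the clauses added by the user propagator during search are exactly the instances of~\eqref{eq:matrixConnect} and~\eqref{eq:openpath1} triggered by the final assignment, and that conflict clauses from unification only exclude assignments that are genuinely unifier-inconsistent. I would handle this using Theorems~\ref{thm:sound} and~\ref{thm:complete}: a model exists iff there is a matrix of size at most $d$ with a spanning, unifiable, fully connected set of connections. That semantic characterisation is then exactly what the $\exists\forall$ formula above decides. A secondary point is the polynomial bound on unification. I would state it via the standard polynomial-time unification algorithms rather than naive substitution, which can blow up exponentially.
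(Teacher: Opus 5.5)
Your proposal is correct and takes essentially the paper's route. Both arguments guess the polynomially many selectors and connection atoms and check them in deterministic polynomial time, then treat the absence of open paths as a coNP condition, which gives $\Sigma_2^P$. The difference is presentation: you write an explicit $\exists\forall$ formula over paths where the paper uses $\mathrm{NP}$ with a $\mathrm{coNP}$ oracle that tests satisfiability of $\sigma(M)$, and you also spell out the polynomial-time (term-sharing) unification check that the paper leaves implicit.
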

\begin{proof}
There are polynomially many SAT variables.
To see this, let $c$ be the number of clauses in the input, containing a total of $l$ literals.
We have at most $d\cdot c$ selectors $S^k_C$.
We also have $O(d^2l^2)$ possible connection literals $\langle L \sim K\rangle$.
Hence, there are only polynomially many instantiations of~\eqref{eq:matrixConnect}. After adding in the worst case, all of them, the problem is in NP. We can non-deterministically guess an assignment for all polynomially many selectors and unification atoms. Checking the model can be done clearly in deterministic polynomial time.
Checking whether a matrix represents a matrix proof is co-NP complete. It can be solved by a separate SAT solver, which checks if the matrix $\sigma(M)$ represented by the SAT model is satisfiable. As we can solve $\mathcal{E}_M$ in NP with a co-NP oracle, the problem of solving our encoding for some fixed limit $d$ is in $\Sigma_2^P$.
\end{proof}
As checking the satisfiability of a set of clauses over rigid variables is $\Sigma_2^P$-complete~\cite{DBLP:conf/stacs/Goubault94}, our approach's complexity coincides with its theoretical bound.
\begin{corollary}[Termination]
A run for solving $\mathcal{E}_M$ at fixed $d$ terminates.
\end{corollary}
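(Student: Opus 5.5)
The plan is to reduce termination to the finiteness of everything the solver and our propagator can ever produce at a fixed depth $d$, and then use the fact that the underlying SAT procedure terminates on a finite clause set. Following the counting in the proof of Theorem~\ref{thm:complexity}, I fix $d$ and note that only finitely many SAT variables can occur. There are at most $d\cdot c$ selectors $S^k_C$ and $O(d^2l^2)$ connection atoms $\langle L \sim K\rangle$. The auxiliary variables of the cardinality encoding for ``exactly $d$ selectors'' are also finitely many. Crucially, propagation~\eqref{eq:matrixConnect} only mentions copies $D^k$ with $k \le d$, so no fresh variables beyond this fixed pool are ever introduced. This is the key contrast with $\mathcal{E}_T$, whose paths $U$ grow without bound.

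Next I bound the set of constraints that can be added. Each instance of~\eqref{eq:matrixConnect} is determined by a selector and a literal, so there are finitely many of them. Each open-path constraint~\eqref{eq:openpath1} is determined by a set $\bar S$ of selectors and a path $U$ through the corresponding copies; there are finitely many such pairs, since there are finitely many subsets of selectors and finitely many paths through at most $d$ clause copies. Unification conflict clauses are clauses over the finite atom pool, so there are finitely many as well. Moreover, every individual check performed by the propagator terminates: unification is decidable and terminates, and enumerating open paths through a finite matrix is finite.

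The main obstacle is ruling out an infinite interaction loop in which the solver keeps reaching candidate models and we keep adding constraints. I would argue that every time the solver reports a full assignment that we reject, the added constraint is falsified by that assignment and is therefore new. The open-path clause excludes the current $U$ for the current $\bar S$, because none of the pairs in $U$ is connected. A unification conflict is violated by the current assignment. Hence each rejection strictly enlarges a clause set drawn from a finite universe. After finitely many rejections, either a model survives the check in Section~\ref{sec:check}, or the formula becomes unsatisfiable, in which case the solver, being a complete and terminating CDCL procedure on a fixed finite formula, reports unsat. Between rejections, termination follows from the standard termination argument for CDCL with lazily added clauses, which applies because clauses are only ever added, never removed, and are bounded in number. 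Either outcome ends the run at this $d$.
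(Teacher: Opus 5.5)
Your argument is correct and takes essentially the paper's route. The paper states the corollary without a separate proof and derives it from the finiteness in the proof of Theorem~\ref{thm:complexity} (polynomially many selectors and connection atoms, hence finitely many instances of~\eqref{eq:matrixConnect}, and a decidable spanning check treated as a co-NP oracle), whereas you also make explicit the progress measure the paper leaves implicit: each rejected model yields a new, currently falsified clause from a finite universe.
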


\section{
Iterative Deepening via Unsat Core Refinement}
\label{sec:unsatCore}
A downside of our encoding $\mathcal{E}_M$, especially of its constraints from Section~\ref{sec:fully-connected}, is that we eagerly introduce and use selectors for clause instances that are not required.
If there is more than one input clause and the matrix is of size $d$, not all clauses can have $d$ copies in the matrix for arithmetic reasons.
Therefore, creating $d$ instances of each clause is overkill. This section addresses this challenge and improves iterative deepening via unsat cores, resulting in a refined encoding 
$\mathcal{E}_U$.  

We use an abstraction-refinement~\cite{cegar} approach to approximate the number of copies required for each clause.
This way, we avoid polluting the search space with likely unnecessary clause instances.
Instead of a coarse global limit $d$, we estimate how many copies of each clause are required with a \emph{multiplicity} $\mu$~\cite{matings-in-matrices}. Initially, we have $\mu(C) = 1$ for start clauses and $\mu(C) = 0$ otherwise.
The multiplicity is monotonically increased based on the unsat core of the following encoding.
We refine constraint~\eqref{eq:matrixConnect} to 
\begin{equation}
    \label{eq:matrixConnect2}
    S_C \Vdash~\bigvee_D~\bigvee_{1 \le k \le \bm{\mu(D) + 1}}~\bigvee_{K \in D^k} S^k_D \wedge \langle L \sim K\rangle
\end{equation}
as we have $\mu(D)$ copies of $D$.
Note that $k$  ranges up to $\mu(D) + 1$.
We add temporary assertions\footnote{named $\kappa$ because it indicates that a clause needs more ``$\kappa$-city''} $\kappa_D := \lnot S_D^{\mu(D) + 1}$ so that the solver cannot select $D^{\mu(D) + 1}$, but \emph{can} report that finding a proof failed in part due to a lack of copies of $D$.
We revise~\eqref{eq:openpath1}, as we can no longer assume that a fully connected matrix has exactly $d$ clauses. 
A matrix can be fully connected, but the desired matrix proof may in fact be a strict \emph{superset} of the current matrix.
As~\eqref{eq:openpath1} is now too strong, we weaken it to
\begin{equation}
\label{eq:openpath2}
\bar{S} \Vdash \left[\bigvee_{\{L, K\} \subseteq U} \langle L \sim K\rangle\right] \bm{\vee \bigvee_{L \in U} F_L}
\end{equation}
where $F_L$ is a formula indicating that $L$ could also be connected to another literal in a clause \emph{not yet in the matrix} and $\bar{S}$ a set of selectors as before in~\eqref{eq:openpath1}.

\begin{example}
    Assume we have $\mu(P(z) \vee P(f(z))) = \mu(\lnot P(x) \vee \lnot P(f(y))) := 2$ and let us consider Example~\ref{ex:connection} again. In this propagation, we would have to additionally list the case that $S^3_{\lnot P(x) \vee \lnot P(f(y))}$ is true. The case $S^3_{P(z) \vee P(f(z))}$ is not needed, as there is no way to connect it anyway. As we assumed that $S^3_{\lnot P(x) \vee \lnot P(f(y))}$ is false, the SAT solver will not assign it true, but might report it in its unsat core in case there are feasible ways to proceed, given a higher depth limit. In Example~\ref{ex:spanning}, we would have to list the additional cases:
    \[ S^2_{P(z) \vee P(f(z))} \wedge \langle \lnot P(f(y^1)) \sim P(z^2)\rangle,~ S^2_{P(z) \vee P(f(z))} \wedge \langle \lnot P(f(y^1)) \sim P(f(z^2))\rangle,~\]
    \[S^2_{P(z) \vee P(f(z))} \wedge \langle \lnot P(x^2) \sim P(z^2)\rangle,~ S^2_{P(z) \vee P(f(z))} \wedge \langle \lnot P(x^2) \sim P(f(z^2))\rangle,~\]
    \[ S^3_{P(z) \vee P(f(z))},~ S^3_{\lnot P(x) \vee \lnot P(f(y))} \]
    As the respective clause selector is assumed false anyway, we do not need to introduce connectedness atoms for clause copies beyond the depth limit. Slight variations and optimisations of the encoding, however, would benefit from encoding those aspects as well.
    Further, listing both the second and third copies of clause $P(z) \vee P(f(z))$ shows one source of symmetry discussed in Section~\ref{sec:optimisations}.
\end{example}

Whenever the SAT solver reports unsatisfiability, we retrieve the \emph{unsat core} representing a potentially non-minimal subset of $\kappa$ assertions sufficient to yield unsatisfiability.
We may increase one or more $\mu(C)$ if the corresponding assertion occurs in the unsat core.
However, to retain completeness, we need to ensure that we eventually increment the multiplicity of every clause appearing repeatedly in the unsat core: in other words, we require \emph{fairness}.
In case the core is empty, we can conclude that no proof exists.
As a result, 
our SAT encoding $\mathcal{E}_U$ with improved iterative deepening is given by~\eqref{eq:start:matrix},~\eqref{eq:matrixConnect2}, and~\eqref{eq:openpath2}.
\begin{example}
\label{ex:infExt}
Consider the input problem
\begin{align*}
&C := P(a) &&D := \forall x.~\lnot P(x) \vee P(f(x)) &&E := \forall y.~\lnot P(y)
\end{align*}
with $C$ as the start clause. $\kappa_{D}$ will always be contained within the unsat core, no matter its multiplicity.
However, a fair enumeration eventually includes $\kappa_{E}$.
\end{example}
Our improved encoding $\mathcal{E}_U$ remains sound and terminating by similar arguments to Theorems~\ref{thm:sound}~and~\ref{thm:complexity}. Completeness requires an adjusted argument.
\begin{theorem}[Completeness]
\label{thm:completeness2}
If a matrix $M$ together with a spanning set of connections exists, there is a corresponding propositional model of $\mathcal{E}_U$.
\end{theorem}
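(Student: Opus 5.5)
We reduce the claim to Theorem~\ref{thm:complete} by showing that the fair multiplicity refinement eventually reaches, or exceeds, the clause counts of $M$. First, by Theorem~\ref{thm:connected}, we replace $M$ with a minimal sub-matrix proof, which is fully connected. For each input clause $C$, let $m(C)$ be the number of copies of $C$ in this minimal $M$. We may assume that $M$ contains at least one start clause, as the paper does throughout.

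\textbf{Key step.} Fix any multiplicity $\mu$ with $\mu(C) \ge m(C)$ for all $C$. We construct a model as in the proof of Theorem~\ref{thm:complete}.
\begin{enumerate}
\item Set $S^k_C$ true iff $k \le m(C)$, so every $\kappa_D = \lnot S^{\mu(D)+1}_D$ holds.
\item Set $\langle L \sim K\rangle$ true iff $L$ and $K$ are connected in the spanning set of $M$. The global substitution $\sigma$ is a unifier for all of these, so no unification conflict is raised.
\item Constraint~\eqref{eq:start:matrix} holds by the start clause assumption.
\item Constraint~\eqref{eq:matrixConnect2} holds because $M$ is fully connected. Every partner copy $D^k$ has $k \le m(D) \le \mu(D)$, which lies within the allowed range.
\item Constraint~\eqref{eq:openpath2} is only triggered by an open path, and the final model has none. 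Intermediate instances propagated during search are valid consequences, since the extra disjuncts $F_L$ only weaken them, so the model is never blocked.
\end{enumerate}
Hence $\mathcal{E}_U$ is satisfiable under the $\kappa$ assumptions whenever $\mu \ge m$ pointwise.

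\textbf{Refinement argument.} Suppose $\mu \not\ge m$, and the solver reports unsatisfiability with core $K$. We claim $K$ contains some $\kappa_D$ with $\mu(D) < m(D)$, and in particular is non-empty. Drop exactly the assumptions $\kappa_D$ with $\mu(D) < m(D)$. Then truncate $M$ to the available copies, i.e.\ take $\min(m,\mu+1)$ copies of each clause. It is not immediate that this truncated matrix yields a model; this is the gap to close.

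The cleaner route is by contraposition on the final model. Set $\mu' = \max(\mu, m)$. The model above, restricted to the variables present at $\mu$, satisfies every clause of the $\mu$-encoding except those $\kappa_D$ with $m(D) > \mu(D)$. Concretely, the copies $D^{\mu(D)+1}$ are selected and used in the disjunctions of~\eqref{eq:matrixConnect2}. Copies $D^k$ with $k > \mu(D)+1$ are not represented, but the $F_L$ disjuncts in~\eqref{eq:openpath2} and the extended range $\mu(D)+1$ account for them.

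This is the main obstacle: showing that the encoding at $\mu$ with only the offending $\kappa$'s removed is satisfiable. I would handle it by arguing that removing those $\kappa_D$ makes the instance a relaxation of the $\mu'$ instance, projected onto the available copies. The partial selection then remains fully connected up to literals whose partners lie in non-represented copies, and these are exactly the literals covered by the $F_L$ terms.

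Given this claim, any unsat core must intersect $\{\kappa_D : \mu(D) < m(D)\}$, and the core is therefore never empty. By fairness, each clause whose $\kappa$ recurs in cores is eventually incremented. Since increments are monotone and each deficient $\mu(D)$ is bounded by $m(D)$, the refinement can stall only while some deficient $\kappa$ keeps appearing. Fairness forbids indefinite stalling, so after finitely many rounds $\mu \ge m$ on all clauses, and the key step yields the desired model.
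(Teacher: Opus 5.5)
Your key step is fine: building a model once $\mu\ge m$ pointwise, exactly as in Theorem~\ref{thm:complete}, is what the paper also relies on. The gap is the refinement step, and it is worse than unproven. The claim that every unsat core contains some $\kappa_D$ with $\mu(D)<m(D)$ is false, and so is the lemma you propose for it.

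Dropping $\kappa_D$ unlocks only the single copy $D^{\mu(D)+1}$. Meanwhile \eqref{eq:matrixConnect2} offers partners only among copies $k\le\mu(D)+1$, and it has no $F_L$-style escape; those disjuncts occur only in \eqref{eq:openpath2}. On the shared variables, the $\mu$-instance of \eqref{eq:matrixConnect2} therefore has \emph{fewer} disjuncts than the $\mu'$-instance. It is a strengthening, not a relaxation, so the projected $\mu'$-model need not satisfy it.

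Concretely, take the only start clause $C := P(a)\vee P(b)$ and $D := \forall x.\,\lnot P(x)$, with proof $\{C^1,D^1,D^2\}$, so $m(D)=2$. At the initial $\mu(C)=1$, $\mu(D)=0$, even with \emph{all} $\kappa$'s dropped the only copy of $D$ is $D^1$. Then \eqref{eq:matrixConnect2} for $C^1$ forces both $\langle P(a)\sim\lnot P(x^1)\rangle$ and $\langle P(b)\sim\lnot P(x^1)\rangle$, which have no common unifier. The encoding has no model even without assumptions, so $\emptyset$ is a valid (indeed the minimal) core.

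The paper's proof makes no global satisfiability claim. It fixes a maximal $M'\subset M$ representable at $\mu$ and looks at the first obstruction. Either a literal of $M'$ is unconnected, and the option in \eqref{eq:matrixConnect2} that would connect it as in $M$ uses exactly a copy blocked by some $\kappa_D$. Or $M'$ is fully connected, and the instance of \eqref{eq:openpath2} for an open path can only be met through a blocked copy. From this it concludes that the blocking $\kappa$ lies in the core; in the example, $M'=\{C^1\}$ and it names $\kappa_D$.

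The example shows that this last inference concerns which conflict the solver actually derives, not a property shared by all cores. So you were right to flag the step as needing a real argument. But it cannot be closed by your relaxation lemma, or by reasoning about cores alone. It needs either an assumption on the core the solver returns, which is what the paper's argument tacitly relies on, or a change to the encoding.
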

\begin{proof}
In addition to Theorem~\ref{thm:complete}, we show that if there is a proof using $M$ which our current $\mu$ does not permit, at least one relevant $\kappa_C$ is contained in the unsat core.
Fairness then ensures we will eventually find the proof.
Consider a maximal subset $M' \subset M$ representable at $\mu$.
\begin{enumerate}
\item If $M'$ cannot be fully connected, it contains at least one literal $L$ with no connections. Since $M'$ is maximal and $M$ can be fully connected, $L$ should be connected to some literal in a clause $D$ that is not yet in the matrix.
This option is offered in~\eqref{eq:matrixConnect2}, but fails
because the respective $\kappa_D$ assumption is forced false. $\kappa_D$ is therefore in the unsat core.
\item If $M'$ can be fully connected, we would have failed to close some open path $U$ and propagated some instance of~\eqref{eq:openpath2}.
Some $L \in U$ must connect to at least one literal of a clause not yet in $M'$ by the right disjunct of~\eqref{eq:openpath2}.
As $M'$ is maximal, we can add no clauses, so the constraint fails because of the $\kappa$ assumption.
\end{enumerate}
\end{proof}
A side effect of encoding $\mathcal{E}_U$ is that it also terminates on some non-theorems.


\section{Redundancy Elimination}
\label{sec:optimisations}
Restricting the SAT solver's search space is beneficial when solving the SAT encodings of Sections~\ref{sec:tableau}--\ref{sec:unsatCore}. In addition to standard techniques,  such as tautology elimination~\cite{handbook-ar-model-elimination}, we propose some specialised redundancy eliminations.

\subsection{Multiplicity Symmetry}\label{sec:sb}
Our encodings from Sections~\ref{sec:tableau}--\ref{sec:unsatCore} contain \emph{several symmetries}~\cite{DBLP:journals/tc/AloulSM06}, which we now \emph{avoid}, rather than \emph{break}~\cite{symmetry-breaking-fmb}.
The first symmetry is that copies of clauses are interchangeable.
Suppose we connect some literal $L$ to literal $K$ in a copy of $C$ not yet in the matrix, and subsequently fail to find a proof in that direction.
Nothing prevents the SAT solver from selecting another copy of $C$ and failing for the same reasons.
We avoid this by propagating
\begin{equation}
    \label{eq:clauseIndexOrder}
    S^{i + 1}_C \Vdash S^i_C, 
\end{equation}
enforcing that $C^i$ is selected only if  $C^j$ with $j < i$ are selected.

\subsection{Subsumption and Instance Symmetry}
Saturation systems often delete a clause $C$ because it is \emph{subsumed}~\cite{term-indexing} by some more-general clause $D$.
Dynamics in connection systems are somewhat different as new first-order clauses are not deduced, but we can profit by applying some subsumption.
If two different clauses $C$ and $D$ are in the current matrix, we can enforce that neither becomes a subset of the other, modulo $\sigma$\footnote{we do not apply an additional substitution to either side}.
This restriction preserves completeness, by Bibel's Lemma 6.8~\cite{Bibel}.

An obvious extension of this idea is removing clauses from the matrix that are subsumed by other clauses from the input set. This fails, however.
\begin{example}
    Consider the four input clauses
    \[ C := P(a) ~~~~ D := Q(a) ~~~~ E := \forall x.~\lnot P(x) \vee Q(x) ~~~~  F := \forall y.~\lnot Q(y) \]
    with $C$ being the only start clause. There is a proof without subsumption via $C$, $E$, and finally $F$; this is the only minimal proof using $C$. However, putting $E$ in the matrix with $\sigma(x) = a$ makes it be subsumed by $D$.
\end{example}
Subsumption in the usual sense of smaller clauses representing any usage of larger clauses fails. This failure occurs because we might lose the \emph{reason to connect} a clause to our current matrix. Keeping larger clauses instead also does not work, as we might not be able to connect all literals of the larger clause.
Nonetheless, we can motivate additional symmetry avoidance this way.
Define an arbitrary total order $\prec$ on input clauses such that start clauses are the least elements. We assume that the order of each clause in the matrix is the same as the order of the clauses in the input set from which they are a copy.

\begin{lemma}[Instance Symmetry]
\label{lem:symmetry}
Suppose there is a matrix proof $M$ containing a clause $D$ with $D \succ C$, and that there is a $\rho$ such that $\rho(C) = \sigma(D)$.
Then $M$ with $D$ exchanged for $C$ also has a spanning set of connections.
\end{lemma}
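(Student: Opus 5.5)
We use a fresh copy $C'$ of $C$ whose variables are renamed apart from everything in $M$, and extend the global substitution so that $C'$ is mapped literal-by-literal onto $\sigma(D)$. Concretely, since $\rho(C)=\sigma(D)$, we define $\sigma' := \sigma \cup \rho'$, where $\rho'$ is $\rho$ transported to the fresh variables of $C'$. Because those variables occur nowhere else in the matrix, $\sigma'$ is well defined and agrees with $\sigma$ on every other clause of $M$. Hence $\sigma'(M\setminus\{D\}) = \sigma(M\setminus\{D\})$ and $\sigma'(C') = \sigma(D)$ as a set of ground-or-rigid literals. We read the equality $\rho(C)=\sigma(D)$ as an equality of clauses, i.e.\ of literal sets. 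This yields a literal correspondence $\pi$ from literals $K$ of $C'$ to literals $\pi(K)$ of $D$ with $\sigma'(K)=\sigma(\pi(K))$, and $\pi$ is surjective onto $D$.

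Next we show that $M' := (M\setminus\{D\})\cup\{C'\}$ has no open path under $\sigma'$. Take any path $U'$ through $M'$. It consists of a path $V$ through $M\setminus\{D\}$ together with one literal $K\in C'$. Set $U := V\cup\{\pi(K)\}$; this is a path through $M$. Since $M$ is a proof, $U$ contains a pair $L_1,L_2$ that is connected under $\sigma$, and we transport this pair to $U'$. If neither literal of the pair is $\pi(K)$, then the same pair lies in $V\subseteq U'$, and it is connected under $\sigma'$ because $\sigma'$ agrees with $\sigma$ there. Otherwise one literal of the pair is $\pi(K)$, and we replace it by $K$. The pair stays connected because $\sigma'(K)=\sigma(\pi(K))$, while the other literal lies in $V$ and is unchanged. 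Thus every path through $M'$ is closed, and the resulting connections form a spanning set.

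A subtlety is that $C$ may have \emph{more} literals than $D$, for instance when $\rho$ merges literals; we handle this by letting $\pi$ be many-to-one, which the argument above already allows. If $C$ has fewer literals, surjectivity is not needed either: we only ever map paths through $M'$ to paths through $M$.

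The main obstacle is the bookkeeping for the substitution. We must ensure that extending $\sigma$ by $\rho$ does not clash with existing bindings, which is why $C$ is renamed apart. We must also ensure that if $M$ already contains other copies of $C$, the new copy gets a fresh index. This is consistent with the multiplicity and selector setup because indices are interchangeable, as noted in Section~\ref{sec:sb}. The ordering $D\succ C$ plays no role in correctness. It only guarantees that repeatedly applying the exchange terminates, since each exchange replaces a clause by a $\prec$-smaller one, which is what justifies using the lemma for symmetry avoidance.
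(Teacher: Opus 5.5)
Your proposal is correct and takes the same route as the paper. Both extend $\sigma$ by $\rho$ on the renamed-apart variables of $C$, so that $C$ coincides with $\sigma(D)$, and then every path through the new matrix inherits a closing connection from the corresponding path through $M$. Your explicit path-transport argument, including the many-to-one literal correspondence when $\rho$ merges literals, spells out what the paper compresses into ``we neither add additional paths that must be closed, nor do we prevent other clauses connecting to $C$.''
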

\begin{proof}
As all variables in $C$ and $D$ are fresh, we can adapt $\sigma$ according to $\rho$. This way, $C$ may be connected to the same literals as $D$. As $\rho(C)$ has the same literals as $\sigma(D)$, we neither add additional paths that must be closed, nor do we prevent other clauses connecting to $C$ because we dropped the respective literal.
\end{proof}
\begin{corollary}[Instance Symmetry Completeness]
Pruning models containing such clauses $D$ remains complete.
\end{corollary}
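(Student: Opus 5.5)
The plan is to reduce the corollary to Lemma~\ref{lem:symmetry} together with the completeness results, Theorems~\ref{thm:complete} and~\ref{thm:completeness2}. We must show that whenever some matrix proof exists, one exists that the pruning does not discard. A matrix proof is pruned exactly when it contains a clause $D$ together with a clause $C$ in the matrix with $C \prec D$ and a $\rho$ such that $\rho(C) = \sigma(D)$. Call such a $D$ \emph{offending}.

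Starting from any matrix proof $M$, we repeatedly apply Lemma~\ref{lem:symmetry}. Each application replaces an offending $D$ by a fresh copy of the smaller clause $C$ and adapts $\sigma$ on the fresh variables via $\rho$. The result is again a matrix with a spanning set of connections, and it has the same number of clause copies.

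To see that this process stops, we use a termination measure. Take the multiset of the $\prec$-positions of the input clauses from which the matrix copies originate, and order such multisets by the multiset extension of $\prec$. Since there are finitely many input clauses, $\prec$ is a well-founded total order, so its multiset extension is well-founded too. Each exchange removes one occurrence of $D$ and adds one of $C \prec D$, so the measure strictly decreases. Hence the process halts at a proof $M^\ast$ with no offending clause, and $M^\ast$ survives the pruning.

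We also have to check that $M^\ast$ still meets the side constraints of the encodings. Its size is unchanged, so the depth limit $d$ is still respected. A start clause is still present: start clauses are the $\prec$-least elements, so an exchange only replaces a clause by one that is smaller or equal, and a start clause can only be replaced by another start clause. Theorem~\ref{thm:complete} then gives a model of $\mathcal{E}_M$ representing $M^\ast$. This model is not pruned, because pruning only forbids configurations containing an offending $D$. The same argument, combined with the unsat-core fairness argument of Theorem~\ref{thm:completeness2}, gives completeness for $\mathcal{E}_U$, since the multiplicities eventually reach those of $M^\ast$.

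The main obstacle is the bookkeeping behind ``adapt $\sigma$ according to $\rho$'' across repeated exchanges. The new copy of $C$ must receive variable indices that are actually fresh. This is compatible with the multiplicity ordering~\eqref{eq:clauseIndexOrder} only after renumbering the copies, which we do at the end by reindexing each clause's copies consecutively. A further subtlety is whether the witness $C$ must itself lie in the matrix. If it must, removing $D$ could destroy the witness for some other offending clause, but that is harmless because the measure argument does not depend on witnesses persisting. If instead the pruning also refers to clauses outside the matrix, the exchange introduces $C$ anyway, so the argument is unchanged.
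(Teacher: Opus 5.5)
Your proposal is correct and matches the paper's argument. The paper gives no separate proof: the corollary is a direct consequence of Lemma~\ref{lem:symmetry}, with the total order $\prec$ (start clauses least) doing the work, and your iterated exchange with a $\prec$-decreasing termination measure, plus the check that the clause count and a start clause are preserved, spells out what the paper leaves implicit.
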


\subsection{Substitution Symmetry}
Symmetry appears in substitutions applied to different copies of the same clause.
\begin{example}
Consider a literal in two copies of the same clause, $L[x]$ and $L[y]$. 
Assume that all attempts with $\sigma(x) = a$ and $\sigma(y) = b$ fail.
Nothing prevents trying again with all connections ``flipped'' to the other clause and $\sigma(x) = b$ and $\sigma(y) = a$, introducing an exponential number of branches in the worst case.
\end{example}
We enforce an ordering on the substitutions applied to the \emph{variables in copies of the same clause}.
This ordering of terms should be stable under substitution and orient as many terms as possible, but need not have the subterm property and therefore may not be a reduction ordering~\cite{term-rewriting-and-all-that}.
We suggest the following order.
Assume an arbitrary total ordering $\prec$ over function symbols.
Define $f(\bar{t}) \prec g(\bar{s})$ iff (i) $f \prec g$ or (ii) $f = g$ and $\bar{t} \prec \bar{s}$.
Sequences of terms $\bar{t} \prec \bar{s}$ are compared lexicographically.
Let $\bar{x}$ be the variables occurring left-to-right in clause $C$.
Given two instances $C^i$ and $C^j$ of the same clause with $i < j$, we may enforce that $\sigma(\bar{x}_i) \nsucceq \sigma(\bar{x}_j)$ to avoid symmetries over clause substitutions.
\begin{lemma}[Spanning Order]\label{lem:reorder}
Suppose $M$ has a spanning set of connections and contains two copies $C^i$ and $C^j$ of the same clause.
Then there is a spanning set of connections that satisfies $\sigma(\bar{x}_i) \nsucceq \sigma(\bar{x}_j)$.
\end{lemma}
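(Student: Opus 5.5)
The plan is a swapping argument. Suppose $\sigma(\bar{x}_i) \succeq \sigma(\bar{x}_j)$, i.e. the constraint fails for $C^i$ and $C^j$ with $i<j$. If $\sigma(\bar{x}_i) = \sigma(\bar{x}_j)$, the two copies coincide modulo $\sigma$. We may then drop one of them: every path through the smaller matrix extends to a path through $M$ by picking the same literal in both copies. This is in the spirit of the subsumption remark (Bibel's Lemma 6.8), and it yields a smaller proof with fewer copies, so we may assume minimality and exclude equality. The substantive case is $\sigma(\bar{x}_i) \succ \sigma(\bar{x}_j)$ strictly.

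In that case I construct a new substitution $\sigma'$ by exchanging the roles of the two copies. Let $\pi$ be the renaming that swaps $\bar{x}_i$ and $\bar{x}_j$ and fixes all other variables, and apply $\pi$ to the matrix. Since $C^i$ and $C^j$ are variants of the same input clause, $\pi(M) = M$ as a set of clauses: $\pi$ maps $C^i$ onto $C^j$ and vice versa, literal by literal. Each connection $\{L, K\}$ is transported to $\{\pi(L), \pi(K)\}$. A connection formerly involving a literal of $C^i$ now involves the corresponding literal of $C^j$, and conversely. Define $\sigma' := \sigma \circ \pi$ on the variables of the matrix. Then $\sigma'(\pi(L)) = \sigma(L)$, so every transported connection is still a connected pair under $\sigma'$. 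Paths are transported bijectively by $\pi$, so every path through $M$ still contains a connected pair, and the set is spanning. Finally, $\sigma'(\bar{x}_i) = \sigma(\bar{x}_j) \prec \sigma(\bar{x}_i) = \sigma'(\bar{x}_j)$. Because $\prec$ is a strict total order, this gives $\sigma'(\bar{x}_i) \nsucceq \sigma'(\bar{x}_j)$.

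The main obstacle is that $M$ may contain more than two copies of $C$, and other clause pairs may also violate the order. Fixing one pair by a swap may break another, so the statement should really be proved simultaneously for all pairs. I would do this by sorting rather than pairwise swapping. Collect all copies $C^1,\dots,C^m$ in $M$ and let $\tau$ be a permutation of $\{1,\dots,m\}$ such that $\sigma(\bar{x}_{\tau(1)}) \preceq \dots \preceq \sigma(\bar{x}_{\tau(m)})$. This uses totality of $\prec$ on ground or partially instantiated term tuples, which I would check holds lexicographically, treating any remaining variables as constants. Apply the induced renaming as above, and eliminate copies with equal images by the dropping argument. Doing this independently for each input clause works, because renamings for different clauses act on disjoint variable sets. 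The result is a spanning set of connections satisfying the ordering for every pair $i<j$ at once, in particular for the given $C^i$ and $C^j$.
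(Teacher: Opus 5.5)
Your proof is correct and is essentially the paper's argument: you rule out equality because duplicate copies can be dropped (the paper simply notes that duplicates are already eliminated), then swap the roles of $C^i$ and $C^j$ through the variable renaming so that $\sigma'(\bar{x}_i)=\sigma(\bar{x}_j)\prec\sigma(\bar{x}_i)=\sigma'(\bar{x}_j)$. Your sorting step for several copies is a direct version of the paper's remark that iterated application of the lemma reorders all copies.
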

\begin{proof}
If this condition does not already hold, we have $\sigma(\bar{x}_i) \succeq \sigma(\bar{x}_j)$.
Duplicate clauses are already eliminated, so in fact $\sigma(\bar{x}_i) \succ \sigma(\bar{x}_j)$.
Now ``swap'' $C^i$ and $C^j$ by exchanging their connections to obtain a new spanning set of connections and consistent substitution $\sigma'$.
Necessarily, $\sigma'(\bar{x}_i) \prec \sigma'(\bar{x}_j)$.
\end{proof}
By iterated application of Lemma~\ref{lem:reorder}, it is possible to ``reorder'' any spanning set of connections into another that respects the order.
\begin{corollary}[Substitution Symmetry Completeness]
Enforcing an ordering on substitution of variables in copies of the same clause remains complete.
\end{corollary}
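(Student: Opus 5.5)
The plan is to reduce the corollary to Lemma~\ref{lem:reorder} by a terminating reordering argument, and then note that the ordered proof is still representable in the encodings. Completeness for $\mathcal{E}_M$ and $\mathcal{E}_U$ is given by Theorems~\ref{thm:complete} and~\ref{thm:completeness2}. It therefore suffices to show the following: whenever some matrix proof $M$ with spanning connections and unifier $\sigma$ exists, there is also one, on the same multiset of clause copies, in which for every input clause $C$ and every $i<j$ with $C^i,C^j\in M$ we have $\sigma(\bar{x}_i) \nsucceq \sigma(\bar{x}_j)$. The model built in the completeness proofs from this ordered proof then satisfies the added ordering constraints. It also still satisfies the remaining constraints. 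Selectors are unchanged, so~\eqref{eq:clauseIndexOrder} and the depth or multiplicity limits are untouched. Connection atoms are merely relabelled between copies of the same clause, so the disjunctions~\eqref{eq:matrixConnect} and~\eqref{eq:matrixConnect2} still offer the needed options.

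The key step is to turn the iterated application of Lemma~\ref{lem:reorder} into an explicit sorting procedure. Fix an input clause $C$ with copies $C^1,\dots,C^m$ in $M$. Consider the tuple of instantiations $\sigma(\bar{x}_1),\dots,\sigma(\bar{x}_m)$. These are pairwise distinct by duplicate elimination, and $\prec$ is total on ground and on rigid-variable term tuples, extended by a fixed order on rigid variables. I would apply a permutation $\pi$ to the copy indices that sorts them in strictly increasing order. The permutation is realised by renaming copy $C^{\pi(k)}$ to $C^k$: its variables $\bar{x}_{\pi(k)}$ become $\bar{x}_k$, every connection incident to a literal of $C^{\pi(k)}$ is reattached to the corresponding literal of $C^k$, and $\sigma'(\bar{x}_k) := \sigma(\bar{x}_{\pi(k)})$. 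This is exactly Lemma~\ref{lem:reorder}'s swap, composed as a product of transpositions.

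Spanningness is preserved because the renaming is a bijection on literals that preserves which literals lie in which clause. It therefore maps paths to paths and connected pairs to connected pairs. Consistency of $\sigma'$ holds because the variable sets of distinct copies are disjoint, so the renaming is a consistent variable permutation. Doing this independently for each input clause gives the ordered proof. Sorting each clause separately is exactly where a naive iterated-swap argument would need a termination measure, and sorting avoids that issue.

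The main obstacle I expect is the well-definedness of the order on substitutions that still contain rigid variables (non-ground $\sigma$). Here ``$\nsucceq$'' must make sense, and it must stay stable after later instantiation during search. I would handle this by taking a proof whose $\sigma$ is ground: any remaining rigid variables can be instantiated with a fixed constant without opening paths, since connectedness is preserved under further substitution. After that step $\prec$ is a total strict order, the sort is well defined, and the argument above goes through.
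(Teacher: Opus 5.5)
Your proposal is correct and follows the paper's route: the paper iterates the pairwise swap of Lemma~\ref{lem:reorder}, and your single sorting permutation is the same argument spelled out more carefully, since it makes termination explicit and you also add the grounding step and the check against the other encoding constraints. One small slip is that grounding every remaining rigid variable to one fixed constant can merge previously distinct copies (e.g.\ in a non-minimal proof), which breaks both the pairwise distinctness your strict sort needs and your ``same multiset'' claim; fix this either by grounding with distinct fresh constants, which is injective and so keeps distinct copies distinct, or by deleting the resulting duplicate copies, which keeps the matrix spanning.
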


\section{Implementation and Experiments}
\label{sec:implementation}\label{sec:implementationOpt}\label{sec:experiments}
\newcolumntype{Y}{>{\centering\arraybackslash}X}

\begin{figure}[t]
    \centering
    \begin{tabularx}{\textwidth}{|l|Y|Y|Y|Y|Y|Y|Y|}
    \hline
    Solver & \upCoP{}$_{{\tt SAT}}$ & \upCoP{}$_{{\tt SMT}}$ & \upCoP{}$_{{\tt SAT}}$ & \upCoP{}$_{{\tt SMT}}$ & \upCoP{}$_{{\tt SAT}}$ & \upCoP{}$_{{\tt SMT}}$ & \meanCoP{} \\
    \hline
    Enc. & \multicolumn{2}{|c|}{$\mathcal{E}_M$} & \multicolumn{2}{|c|}{$\mathcal{E}_U$} & \multicolumn{2}{|c|}{$\mathcal{E}_H$} & \\
    \hline
    Solved & 928 & 855 & 1,152 & 1,055 & 1,272 & 1,264 & 1,972 \\
    \hline
    Unique & 27 & 20 & 109 & 93 & 105 & 76 & 551 \\
    \hline
    \end{tabularx}
    \caption{Experimental summary. \upCoP{}$_{{\tt SAT}}$ and \upCoP{}$_{{\tt SMT}}$ denote the \upCoP{} versions based on \cadical{} (SAT) and \zzz (SMT). ``Unique'' lists the number of problems solvable by \upCoP{}, but not \meanCoP{}, and vice versa.}
    \label{fig:experiments}
\end{figure}

\paragraph{Implementation.}
We implemented the encodings and optimisations of Sections~\ref{sec:tableau}--\ref{sec:optimisations} in our new prototype  \upCoP{}\footnote{at 
\url{https://github.com/CEisenhofer/UPCoP}}.
\upCoP{} uses either the user-propagator of the \cadical{} SAT solver~\cite{ipasir-up} or of the \zzz{} SMT solver~\cite{user-propagation}. 

Due to our encoding via SAT variables,  \upCoP{}  can choose an arbitrary atom to be assigned in case of a decision. However, such an arbitrary atom selection conflicts with the notion of goal-directedness. 
Consider, for example,  the initial constraint~\eqref{eq:start:matrix} that at least one selector of a conjecture clause has to be taken. However, this constraint does not require that the solver start by choosing one of those clauses, but only that one start clause is selected in the final model.
As a result, the solver might pick an arbitrary selector and derive a submatrix from it that is not connectable to any conjecture. A similar situation may occur when propagating all possible candidates in~\eqref{eq:matrixConnect}.
We therefore force \upCoP{} always to pick one of the currently relevant options, rather than leaving the choice to the internal heuristics of \upCoP{}. Further, we make \upCoP{} to assign currently irrelevant selectors false, rather than true.

We note that \upCoP{} might spend significant reasoning time upon term equalities and orderings not required by any selected clause in the matrix. To overcome this issue, \upCoP{} associates every non-ground term uniquely to the clause copy it originates from. This way, we process (in)equalities reported by those non-ground terms only in case the respective selector is assigned true.

\paragraph{Experimental Setup.} 
We evaluated \upCoP{} on the TPTP benchmark set~\cite{TPTP} of first-order problems using an AMD EPYC 7502 clocked at roughly $1.8$GHz, a limit of $16$GB memory per run, and a $30$s timeout. 
We considered all $6468$ provable first-order problems of the TPTP 8.2.0 problem repository by translating them upfront to the SMT-LIB input format~\cite{BarFT-SMTLIB}.

We evaluated \upCoP{} with three encodings, as follows. First, we used the encodings $\mathcal{E}_M$ and $\mathcal{E}_U$ of Sections~\ref{sec:matrix}--\ref{sec:unsatCore}, respectively. Further, we considered a hybrid approach $\mathcal{E}_H$ which combines $\mathcal{E}_M$ and $\mathcal{E}_U$ in a way such that the capacity of every (non-ground) clause is increased on each failure, but the set of clauses in the matrix is limited by the number of copies available, rather than by a strict upper bound.
For each encoding, we separately evaluated \upCoP{} with user-propagators of \cadical{} and \zzz. 
Our experimental results thus report on six different settings of \upCoP{}, shown in Figure~\ref{fig:experiments}.
In addition, we compared \upCoP{} with the default (complete) mode of \meanCoP{}~\cite{curiously-effective}. 

\paragraph{Experimental Analysis.}   
Our results are summarised in Figure~\ref{fig:experiments}, with all versions of \upCoP{} solving altogether 1,601 problems. While \upCoP{} solves less problems than \meanCoP{}, we emphasize that 
\upCoP{} solves $179$ problems from all different kinds of categories that \meanCoP{} cannot. There are many different reasons for this. There are mainly two reasons apart from the ``non-determinism'' in the choice of which literals to connect. The first one is that some matrix proofs found by \upCoP{} are in fact decently smaller than the respective tableau-shaped proof of \meanCoP{}. The second one is that the unsat core-based encodings indeed sometimes deduce that certain clauses are completely irrelevant, and consequently ignore them, resulting in a reasonable speed-up.

Counter-intuitively, \upCoP{} usually does not get stuck in eliminating non-spanning connections. Most cases in which \upCoP{} does not terminate are because \upCoP{} is too slow to find a fully-connected matrix due to the previously discussed points. As soon as a fully-connected matrix is found, it becomes spanning after only a few rejected matrices. Although the spanningness check is co-NP complete, we did not encounter cases where it took a significant time.

We note that $\mathcal{E}_M$ terminates quickly on inputs with small (potentially non-tree-shaped) proofs. $\mathcal{E}_U$ is best suited for large problems with redundant unconnectable clauses, as we will not generate further copies of the redundant clauses. $\mathcal{E}_H$ best pays off on problems whose proofs use most clauses from the input, but each of them only a few times. 

\paragraph{Discussion.} 
When analysing our experimental results, we note the following aspects for further improvements in \upCoP{}:  
\begin{itemize}
    \item The SAT/SMT solvers used by \upCoP{} repeatedly learn additional facts about the underlying proof systems, mostly using new Boolean variables. Hence, the solvers might decide to split on these Boolean variables. This can slow down proof search, as the solver might put a lot of effort into learning facts about partial proofs that cannot be extended to a proper proof. Further, some of these Boolean assignments represent partial proofs that a naive proof enumeration cannot even reach. 
    \item Learning clauses coming from those unfruitful proof spaces also do not necessarily combine: for example, the solver can consider multiple submatrices simultaneously, even though these submatrices can never be combined. The assumption that conflict learning and SAT heuristics will guide the solver towards a more focused search has not been vindicated.
    Postponing the propagation of possible connections from submatrices not attached to the main matrix helps to reduce the number of Boolean variables and the overhead of propagating potentially unused constraints, but requires additional expensive book-keeping.
   \item Adjusting the SAT/SMT solvers' decision heuristic to prefer Boolean variables relevant for the current proof state might be beneficial~\cite{RelevancyProapgation}. Yet, such adjustments harm efficient heuristics for variable selection within the solver.
   \item The encoding $\mathcal{E}_T$ suffers because the conflicts mainly learned depend on the precise paths in the tableau. Similarly, conflicts learned about the Boolean variables in the matrix encodings $\mathcal{E}_M$ and $\mathcal{E}_U$ do not generalise well from one copy to the other in a way such that multiple copies of the same clause are considered in an isomorphic way numerous times. 
   \item Additionally, SAT/SMT solvers incur non-negligible overhead from propagation loops for processing clauses. While simulating a single rule via SAT, a naive proof enumeration solver can process multiple steps within that time.
\end{itemize}

As shown and discussed before, one way of tackling most of those problems is to implement a custom variant of conflict learning that does not have the discussed behaviours that are mostly inherent to modern SAT solvers. Based on these results, we build a further system \textsf{hopCoP}~\cite{learning} that implements such a custom conflict learning engine for the tableaux form of connection calculus.

\section{Related Work}
Most first-order theorem provers employ a variety of ground reasoning techniques, predominantly SAT and SMT solvers.
Here we must mention the family of instance-based methods~\cite{instance-based-methods}: grounding a set of first-order clauses in the hope that they become unsatisfiable, which can be employed with a dedicated calculus~\cite{InstGen} or alongside an existing system~\cite{SATCoP,e-grounding}.
In the other direction, SMT solvers often integrate quantifier instantiation into satisfiability routines~\cite{mbqi,e-matching}.
Ground reasoning can also be used for many other combinatorial tasks in first-order theorem provers~\cite{uses-of-sat-in-vampire}, such as keeping track of clause splitting~\cite{AVATAR}, detecting subsumption~\cite{sat-subsumption}, or determining if inferences are applicable~\cite{sat-subsumption-resolution}.
MACE-style finite model builders~\cite{mace} employ SAT solving to determine whether a set of clauses is satisfiable, assuming a fixed-size finite model, using symmetry-breaking~\cite{symmetry-breaking-fmb}.

Apart from our other approach~\cite{learning} using a custom conflict learning engine and restricting ourselves to the direct encoding of proof objects, the ChewTPTP system in both its SAT~\cite{chewtptp-sat} and SMT~\cite{chewtptp-smt} incarnations is the closest existing approach to theorem proving via satisfiability solving.
ChewTPTP encodes constraints for a closed connection tableau completely ahead of time, then passes the resulting constraints to a SAT or SMT solver. The higher-order systems Satallax~\cite{Satallax}  and its fork Lash~\cite{lash}  also reduce proof search to the propositional level. However, in contrast to \upCoP{}, when the Satallax's SAT solver reports unsatisfiability, a proof has been found.
We previously published an early version of our ideas in a more general setting~\cite{upCoP}.

\section{Conclusion}
We encode first-order connection calculi as a propositional problem, improve our SAT encodings for matrix proofs, and guide iterative deepening using unsat cores. We introduce several optimisations to prune symmetries and eliminate unnecessary branches during proof search. Initial experiments with our new \upCoP{} solver show that our encoding can be used to solve certain problems, other connection solvers fail to solve. Further improvements may come with applying custom garbage-collection and variable selection heuristic strategies in \upCoP{} due to the discussed problems encountered during solving.

\subsubsection{Acknowledgements.} This research was funded in whole or in part by the  ERC Consolidator Grant ARTIST 101002685, the ERC Proof of Concept Grant LEARN 101213411, the TU Wien Doctoral College SecInt, the FWF SpyCoDe Grant 10.55776/F85,  the WWTF grant ForSmart   10.47379/ICT22007, and the Amazon Research Award 2023 QuAT.

\subsubsection{Disclosure of Interests.}
The authors have no competing interests to declare that are relevant to the content of this article.

\bibliographystyle{splncs04}
\bibliography{main}

@article{DBLP:journals/jar/Andrews89,
  author       = {Peter B. Andrews},
  title        = {{On Connections and Higher-Order Logic}},
  journal      = {J. Autom. Reason.},
  volume       = {5},
  number       = {3},
  pages        = {257--291},
  year         = {1989},
  url          = {https://doi.org/10.1007/BF00248320},
  doi          = {10.1007/BF00248320},
  }

@article{Galmiche2000, 
title = {{Connection Methods in Linear Logic and Proof Nets Construction}},
journal = {Theoretical Computer Science},
volume = {232},
number = {1},
pages = {231-272},
year = {2000},
issn = {0304-3975},
doi = {https://doi.org/10.1016/S0304-3975(99)00176-0},
url = {https://www.sciencedirect.com/science/article/pii/S0304397599001760},
author = {D. Galmiche},
}

@inproceedings{DBLP:conf/stacs/Goubault94,
  author       = {Jean Goubault},
  title        = {The Complexity of Resource-Bounded First-Order Classical Logic},
  booktitle    = {{STACS}},
  series       = {{LNCS}},
  pages        = {59--70},
  year         = {1994},
  doi          = {10.1007/3-540-57785-8\_131}
}

@inproceedings{leanCoP,
  author       = {Jens Otten},
  title        = {\textsf{leanCoP} 2.0 and \textsf{ileanCoP} 1.2: High Performance Lean Theorem Proving in Classical and Intuitionistic Logic (System Descriptions)},
  booktitle    = {{IJCAR}},
  series       = {{LNCS}},
  pages        = {283--291},
  year         = {2008},
  doi          = {10.1007/978-3-540-71070-7\_23}
}

@inproceedings{curiously-effective,
  author       = {Michael F{\"{a}}rber},
  title        = {A Curiously Effective Backtracking Strategy for Connection Tableaux},
  pages        = {23--40},
  booktitle     = {{AReCCa}}, 
  year = {2023}, 
  url={https://ceur-ws.org/Vol-3613/},
}

@inproceedings{upCoP,
  author       = {Eisenhofer, Clemens and Kov{\'a}cs, Laura and Rawson, Michael},
  title        = {Embedding the Connection Calculus in Satisfiability Modulo Theories},
  pages        = {54--63},
  booktitle     = {{AReCCa}}, 
  year = {2023}, 
  url={https://ceur-ws.org/Vol-3613/},
}

@inproceedings{Vampire,
  author       = {Laura Kov{\'{a}}cs and Andrei Voronkov},
  title        = {First-Order Theorem Proving and \textsc{Vampire}},
  booktitle    = {{CAV}},
  series       = {{LNCS}},
  pages        = {1--35},
  year         = {2013},
  doi          = {10.1007/978-3-642-39799-8\_1}
}

@inproceedings{eprover,
  author       = {Stephan Schulz and
                  Simon Cruanes and
                  Petar Vukmirovic},
  title        = {Faster, Higher, Stronger: {E} 2.3},
  booktitle    = {{CADE}},
  series       = {{LNCS}},
  pages        = {495--507},
  year         = {2019},
  doi          = {10.1007/978-3-030-29436-6\_29}
}

@inproceedings{SATCoP,
  author       = {Michael Rawson and
                  Giles Reger},
  title        = {Eliminating Models During Model Elimination},
  booktitle    = {{TABLEAUX}},
  series       = {{LNCS}},
  volume       = {12842},
  pages        = {250--265},
  year         = {2021},
  doi          = {10.1007/978-3-030-86059-2\_15}
}

@inproceedings{AVATAR,
  author       = {Andrei Voronkov},
  title        = {{AVATAR:} The Architecture for First-Order Theorem Provers},
  booktitle    = {{CAV}},
  series       = {{LNCS}},
  volume       = {8559},
  pages        = {696--710},
  year         = {2014},
  doi          = {10.1007/978-3-319-08867-9\_46}
}

@book{tableauxHandbook,
  title={Handbook of tableau methods},
  author={D'Agostino, Marcello and Gabbay, Dov M and H{\"a}hnle, Reiner and Posegga, Joachim},
  year={2013},
  publisher={Springer Science \& Business Media}
}

@book{Bibel,
  series = {Artificial intelligence},
  isbn = {3528185201},
  year = {1987},
  title = {Automated theorem proving},
  edition = {2., rev. ed.},
  publisher = {Vieweg},
  author = {Bibel, Wolfgang}
}

@incollection{taxonomy,
  author       = {Maria Paola Bonacina},
  title        = {A Taxonomy of Theorem-Proving Strategies},
  booktitle    = {Artificial Intelligence Today: Recent Trends and Developments},
  series       = {{LNCS}},
  volume       = {1600},
  pages        = {43--84},
  publisher    = {Springer},
  year         = {1999},
  doi          = {10.1007/3-540-48317-9\_3}
}

@inproceedings{comparison-of-proof-methods,
  title={Comparison of proof methods},
  author={Bibel, Wolfgang},
  pages={119-132},
 booktitle     = {{AReCCa}}, 
year = {2023}, 
url={https://ceur-ws.org/Vol-3613/},
}

@book{handbook-of-satisfiability,
  title        = {Handbook of Satisfiability - Second Edition},
  editor       = {Armin Biere and
                  Marijn Heule and
                  Hans van Maaren and
                  Toby Walsh},
  series       = {Frontiers in Artificial Intelligence and Applications},
  volume       = {336},
  year         = {2021},
  publisher    = {{IOS} Press},
  doi          = {10.3233/FAIA336},
  isbn         = {978-1-64368-160-3}
}

@book{smullyan,
	author = {Raymond Merrill Smullyan},
	title = {First-Order Logic},
	year = {1968},
    publisher={Springer}
}

@article{setheo,
  author       = {Reinhold Letz and
                  Johann Schumann and
                  Stefan Bayerl and
                  Wolfgang Bibel},
  title        = {{SETHEO:} {A} High-Performance Theorem Prover},
  journal      = {J. Autom. Reason.},
  volume       = {8},
  number       = {2},
  pages        = {183--212},
  year         = {1992},
  doi          = {10.1007/BF00244282}
}

@inproceedings{spass,
  author       = {Christoph Weidenbach and
                  Dilyana Dimova and
                  Arnaud Fietzke and
                  Rohit Kumar and
                  Martin Suda and
                  Patrick Wischnewski},
  title        = {{SPASS} Version 3.5},
  booktitle    = {{CADE}},
  series       = {{LNCS}},
  volume       = {5663},
  pages        = {140--145},
  year         = {2009},
  doi          = {10.1007/978-3-642-02959-2\_10}
}

@inproceedings{ipasir-up,
  author       = {Katalin Fazekas and
                  Aina Niemetz and
                  Mathias Preiner and
                  Markus Kirchweger and
                  Stefan Szeider and
                  Armin Biere},
  title        = {{IPASIR-UP:} User Propagators for {CDCL}},
  booktitle    = {{SAT}},
  series       = {LIPIcs},
  volume       = {271},
  pages        = {8:1--8:13},
  year         = {2023},
  doi          = {10.4230/LIPICS.SAT.2023.8}
}

@inproceedings{user-propagation,
  author       = {Nikolaj S. Bj{\o}rner and
                  Clemens Eisenhofer and
                  Laura Kov{\'{a}}cs},
  title        = {Satisfiability Modulo Custom Theories in {Z3}},
  booktitle    = {{VMCAI}},
  series       = {{LNCS}},
  volume       = {13881},
  pages        = {91--105},
  year         = {2023},
  doi          = {10.1007/978-3-031-24950-1\_5}
}

@article{instance-based-methods,
  author       = {Peter Baumgartner and
                  Evgenij Thorstensen},
  title        = {Instance Based Methods -- {A} Brief Overview},
  journal      = {K{\"{u}}nstliche Intell.},
  volume       = {24},
  number       = {1},
  pages        = {35--42},
  year         = {2010},
  doi          = {10.1007/S13218-010-0002-X}
}

@inproceedings{sat-subsumption,
  author       = {Jakob Rath and
                  Armin Biere and
                  Laura Kov{\'{a}}cs},
  title        = {First-Order Subsumption via {SAT} Solving},
  booktitle    = {{FMCAD}},
  pages        = {160--169},
  year         = {2022},
  doi          = {10.34727/2022/ISBN.978-3-85448-053-2\_22}
}

@inproceedings{DBLP:conf/jelia/Tinelli02,
  author       = {Cesare Tinelli},
  title        = {{A DPLL-Based Calculus for Ground Satisfiability Modulo Theories}},
  booktitle    = {{JELIA}},
  pages        = {308--319},
  year         = {2002},
  doi          = {10.1007/3-540-45757-7\_26}
}

@inproceedings{DBLP:conf/sat/BiereFW23,
  author       = {Armin Biere and
                  Nils Froleyks and
                  Wenxi Wang},
  title        = {{CadiBack: Extracting Backbones with CaDiCaL}},
  booktitle    = {SAT},
  series       = {LIPIcs},
  volume       = {271},
  pages        = {3:1--3:12},
  year         = {2023},
  doi          = {10.4230/LIPICS.SAT.2023.3},
}

@article{e-grounding,
  title={Light-weight integration of {SAT} solving into first-order reasoners -- first experiments},
  author={Schulz, Stephan},
  journal={{Vampire}},
  pages={9--19},
  year={2017}
}

@inproceedings{instgen,
  author       = {Konstantin Korovin},
  title        = {{Inst-Gen} - {A} Modular Approach to Instantiation-Based Automated Reasoning},
  booktitle    = {Programming Logics - Essays in Memory of Harald Ganzinger},
  series       = {{LNCS}},
  volume       = {7797},
  pages        = {239--270},
  year         = {2013},
  doi          = {10.1007/978-3-642-37651-1\_10}
}

@inproceedings{mace,
  title={New techniques that improve {MACE}-style finite model finding},
  author={Claessen, Koen and S{\"o}rensson, Niklas},
  booktitle={Proceedings of the CADE-19 Workshop: Model Computation-Principles, Algorithms, Applications},
  pages={11--27},
  year={2003}
}

@inproceedings{symmetry-breaking-fmb,
  author       = {Giles Reger and
                  Martin Riener and
                  Martin Suda},
  title        = {Symmetry Avoidance in {MACE}-Style Finite Model Finding},
  booktitle    = {{FroCoS}},
  series       = {{LNCS}},
  volume       = {11715},
  pages        = {3--21},
  year         = {2019},
  doi          = {10.1007/978-3-030-29007-8\_1}
}

@inproceedings{uses-of-sat-in-vampire,
  author       = {Giles Reger and
                  Martin Suda},
  title        = {The Uses of {SAT} Solvers in \textsc{Vampire}},
  booktitle    = {{Vampire}},
  series       = {EPiC Series in Computing},
  volume       = {38},
  pages        = {63--69},
  year         = {2015},
  doi          = {10.29007/4W68}
}

@inproceedings{chewtptp-sat,
  author       = {Todd Deshane and
                  Wenjin Hu and
                  Patty Jablonski and
                  Hai Lin and
                  Christopher Lynch and
                  Ralph Eric McGregor},
  title        = {Encoding First Order Proofs in {SAT}},
  booktitle    = {{CADE}},
  series       = {{LNCS}},
  volume       = {4603},
  pages        = {476--491},
  year         = {2007},
  doi          = {10.1007/978-3-540-73595-3\_35}
}

@inproceedings{chewtptp-smt,
  author       = {Jeremy Bongio and
                  Cyrus Katrak and
                  Hai Lin and
                  Christopher Lynch and
                  Ralph Eric McGregor},
  title        = {Encoding First Order Proofs in {SMT}},
  booktitle    = {{SMT}},
  series       = {{ENTCS}},
  volume       = {198},
  pages        = {71--84},
  year         = {2007},
  doi          = {10.1016/J.ENTCS.2008.04.081}
}

@incollection{handbook-ar-nf,
  author       = {Matthias Baaz and
                  Uwe Egly and
                  Alexander Leitsch},
  title        = {Normal Form Transformations},
  booktitle    = {Handbook of Automated Reasoning (in 2 volumes)},
  publisher    = {Elsevier and {MIT} Press},
  pages        = {273--333},
  year         = {2001},
  doi          = {10.1016/B978-044450813-3/50007-2}
}

@incollection{handbook-ar-paramodulation,
  author       = {Robert Nieuwenhuis and
                  Albert Rubio},
  title        = {Paramodulation-Based Theorem Proving},
  booktitle    = {Handbook of Automated Reasoning (in 2 volumes)},
  pages        = {371--443},
  publisher    = {Elsevier and {MIT} Press},
  year         = {2001},
  doi          = {10.1016/B978-044450813-3/50009-6}
}

@incollection{handbook-ar-model-elimination,
  author       = {Reinhold Letz and
                  Gernot Stenz},
  title        = {Model Elimination and Connection Tableau Procedures},
  publisher    = {Elsevier and {MIT} Press},
  booktitle    = {Handbook of Automated Reasoning (in 2 volumes)},
  pages        = {2015--2114},
  year         = {2001},
  doi          = {10.1016/B978-044450813-3/50030-8}
}

@article{brand,
  author       = {Daniel Brand},
  title        = {Proving Theorems with the Modification Method},
  journal      = {{SIAM} J. Comput.},
  volume       = {4},
  number       = {4},
  pages        = {412--430},
  year         = {1975},
  doi          = {10.1137/0204036}
}

@inproceedings{using-matings-for-pruning,
  author       = {Reinhold Letz},
  title        = {Using Matings for Pruning Connection Tableaux},
  booktitle    = {{CADE}},
  series       = {{LNCS}},
  volume       = {1421},
  pages        = {381--396},
  year         = {1998},
  doi          = {10.1007/BFB0054273}
}

@inproceedings{matrix-based-constructive,
  author       = {Christoph Kreitz and
                  Jens Otten and
                  Stephan Schmitt and
                  Brigitte Pientka},
  title        = {Matrix-based Constructive Theorem Proving},
  booktitle    = {Intellectics and Computational Logic (to {Wolfgang Bibel} on the occasion
                  of his 60th birthday)},
  series       = {Applied Logic Series},
  volume       = {19},
  pages        = {189--205},
  year         = {2000}
}

@article{matings-in-matrices,
  author       = {Wolfgang Bibel},
  title        = {Matings in Matrices},
  journal      = {Commun. {ACM}},
  volume       = {26},
  number       = {11},
  pages        = {844--852},
  year         = {1983},
  doi          = {10.1145/182.183}
}

@incollection{term-indexing,
  author       = {I. V. Ramakrishnan and
                  R. Sekar and
                  Andrei Voronkov},
  title        = {{Term Indexing}},
  booktitle    = {Handbook of Automated Reasoning (in 2 volumes)},
  pages        = {1853--1964},
  publisher    = {Elsevier and {MIT} Press},
  year         = {2001},
  doi          = {10.1016/b978-044450813-3/50028-x}
}

@article{datatypes,
  author       = {Clark W. Barrett and
                  Igor Shikanian and
                  Cesare Tinelli},
  title        = {An Abstract Decision Procedure for a Theory of Inductive Data Types},
  journal      = {J. Satisf. Boolean Model. Comput.},
  volume       = {3},
  number       = {1-2},
  pages        = {21--46},
  year         = {2007},
  doi          = {10.3233/SAT190028}
}

@inproceedings{e-matching,
  author       = {Leonardo Mendon{\c{c}}a de Moura and Nikolaj S. Bj{\o}rner},
  title        = {Efficient E-Matching for {SMT} Solvers},
  booktitle    = {{CADE}},
  series       = {{LNCS}},
  volume       = {4603},
  pages        = {183--198},
  year         = {2007},
  doi          = {10.1007/978-3-540-73595-3\_13}
}

@inproceedings{mbqi,
  author       = {Yeting Ge and Leonardo Mendon{\c{c}}a de Moura},
  title        = {Complete Instantiation for Quantified Formulas in Satisfiabiliby Modulo Theories},
  booktitle    = {{CAV}},
  series       = {{LNCS}},
  volume       = {5643},
  pages        = {306--320},
  year         = {2009},
  doi          = {10.1007/978-3-642-02658-4\_25}
}

@inproceedings{DBLP:conf/cp/Sinz05,
  author       = {Carsten Sinz},
  title        = {Towards an Optimal {CNF} Encoding of {Boolean} Cardinality Constraints},
  booktitle    = {{CP}},
  series       = {{LNCS}},
  volume       = {3709},
  pages        = {827--831},
  year         = {2005},
  doi          = {10.1007/11564751\_73}
}

@inproceedings{DBLP:conf/cp/BailleuxB03,
  author       = {Olivier Bailleux and Yacine Boufkhad},
  title        = {Efficient {CNF} Encoding of {Boolean} Cardinality Constraints},
  booktitle    = {{CP}},
  series       = {{LNCS}},
  volume       = {2833},
  pages        = {108--122},
  year         = {2003},
  doi          = {10.1007/978-3-540-45193-8\_8}
}

@article{DBLP:journals/tcad/ChaiK05,
  author       = {Donald Chai and Andreas Kuehlmann},
  title        = {A fast pseudo-{Boolean} constraint solver},
  journal      = {{IEEE} Trans. Comput. Aided Des. Integr. Circuits Syst.},
  volume       = {24},
  number       = {3},
  pages        = {305--317},
  year         = {2005},
  doi          = {10.1109/TCAD.2004.842808}
}

@inproceedings{sat-subsumption-resolution,
  author       = {Robin Coutelier and Laura Kov{\'{a}}cs and Michael Rawson and Jakob Rath},
  title        = {{SAT}-Based Subsumption Resolution},
  booktitle    = {{CADE}},
  series       = {{LNCS}},
  volume       = {14132},
  pages        = {190--206},
  year         = {2023},
  doi          = {10.1007/978-3-031-38499-8\_11}
}

@book{term-rewriting-and-all-that,
  author       = {Franz Baader and Tobias Nipkow},
  title        = {Term rewriting and all that},
  year         = {1998},
  publisher    = {Cambridge university press},
  isbn         = {978-0-521-45520-6}
}

@article{DBLP:journals/tc/AloulSM06,
  author       = {Fadi A. Aloul and
                  Karem A. Sakallah and
                  Igor L. Markov},
  title        = {Efficient Symmetry Breaking for {Boolean} Satisfiability},
  journal      = {{IEEE} Trans. Computers},
  volume       = {55},
  number       = {5},
  pages        = {549--558},
  year         = {2006},
  doi          = {10.1109/TC.2006.75}
}

@inproceedings{minimal-unsat-cores,
  author       = {In{\^{e}}s Lynce and
                  Jo{\~{a}}o Marques{-}Silva},
  title        = {On Computing Minimum Unsatisfiable Cores},
  booktitle    = {{SAT}},
  year         = {2004},
  url          = {http://www.satisfiability.org/SAT04/programme/110.pdf}
}

@article{minimal-unsat-cores-smt,
  author       = {Alessandro Cimatti and
                  Alberto Griggio and
                  Roberto Sebastiani},
  title        = {Computing Small Unsatisfiable Cores in Satisfiability Modulo Theories},
  journal      = {J. Artif. Intell. Res.},
  volume       = {40},
  pages        = {701--728},
  year         = {2011},
  doi          = {10.1613/JAIR.3196}
}

@article{cegar,
  author       = {Edmund M. Clarke and
                  Orna Grumberg and
                  Somesh Jha and
                  Yuan Lu and
                  Helmut Veith},
  title        = {Counterexample-guided abstraction refinement for symbolic model checking},
  journal      = {J. {ACM}},
  volume       = {50},
  number       = {5},
  pages        = {752--794},
  year         = {2003},
  doi          = {10.1145/876638.876643}
}

@Article{TPTP,
    author       = {Sutcliffe, G.},
    year         = {2017},
    title        = {The {TPTP} Problem Library and Associated Infrastructure. From {CNF} to {TH0}, {TPTP v6.4.0}},
    journal      = {J. Autom. Reason.},
    volume       = {59},
    number       = {4},
    pages        = {483-502}
}

@inproceedings{DBLP:conf/lpar/Otten12,
  author       = {Jens Otten},
  title        = {Implementing Connection Calculi for First-order Modal Logics},
  booktitle    = {{IWIL}},
  series       = {EPiC Series in Computing},
  volume       = {22},
  pages        = {18--32},
  publisher    = {EasyChair},
  year         = {2012},
  doi          = {10.29007/82M9}
}

@MISC{BarFT-SMTLIB,
  author =	 {Clark Barrett and Pascal Fontaine and Cesare Tinelli},
  title =	 {The Satisfiability Modulo Theories Library ({SMT-LIB})},
  howpublished = {{\tt www.SMT-LIB.org}},
  year =	 2016,
}

@inproceedings{DBLP:conf/tacas/MouraB08,
  author       = {Leonardo Mendon{\c{c}}a de Moura and
                  Nikolaj S. Bj{\o}rner},
  title        = {{Z3:} An Efficient {SMT} Solver},
  booktitle    = {{TACAS}},
  series       = {{LNCS}},
  volume       = {4963},
  pages        = {337--340},
  publisher    = {Springer},
  year         = {2008},
  doi          = {10.1007/978-3-540-78800-3\_24}
}

@inproceedings{CaDiCal,
    author    = {Armin Biere and Katalin Fazekas and Mathias Fleury and Maximillian Heisinger},
    title     = {{CaDiCaL}, {Kissat}, {Paracooba}, {Plingeling} and {Treengeling}
		 Entering the {SAT Competition 2020}},
    pages     = {51--53},
    booktitle = {Proc.~of {SAT Competition} 2020 -- Solver and Benchmark Descriptions},
    series    = {Department of Computer Science Report Series B},
    publisher = {University of Helsinki},
    year      = {2020}
}

@article{RelevancyProapgation,
  title={Relevancy propagation},
  author={de Moura, Leonardo and Bj{\o}rner, Nikolaj},
  journal={Technical Report MSR-TR-2007-140, Microsoft Research, Tech. Rep.},
  year={2007},
  publisher={Citeseer}
}

@inproceedings{learning,
    author = {Michael Rawson and Clemens Eisenhofer and Laura Kov{\'{a}}cs},
    title = {Constraint Learning for Non-Confluent Proof Search},
    booktitle    = {{TABLEAUX}},
    volume       = {15980},
    series       = {{LNCS}},
    year         = {2025},
    pages        = {103--119},
    doi          = {10.1007/978-3-032-06085-3\_6}
}

@article{lash,
  author       = {Chad E. Brown and
                  Cezary Kaliszyk},
  title        = {{L}ash 1.0 (System Description)},
  journal      = {CoRR},
  volume       = {abs/2205.06640},
  year         = {2022},
  doi          = {10.48550/ARXIV.2205.06640},
  eprinttype    = {arXiv},
  eprint       = {2205.06640}
}

@inproceedings{Satallax,
  author       = {Michael F{\"{a}}rber and
                  Chad E. Brown},
  title        = {Internal Guidance for Satallax},
  booktitle    = {{IJCAR}},
  series       = {{LNCS}},
  volume       = {9706},
  pages        = {349--361},
  publisher    = {Springer},
  year         = {2016},
  doi          = {10.1007/978-3-319-40229-1\_24}
}
\end{document}